\newif\ifarxiv
\arxivtrue
\ifdefined\pdfminorversion
\fi
\ifdefined\pdfobjcompresslevel
\fi
\documentclass[conference]{IEEEtran}
\IEEEoverridecommandlockouts
\usepackage{cite}
\usepackage{amsmath,amssymb,amsfonts}
\usepackage{amsthm}
\usepackage{graphicx}
\usepackage{textcomp}
\usepackage{xcolor}
\usepackage[hyphens]{url}
\usepackage{array}
\usepackage{booktabs}
\usepackage{algorithm}
\usepackage[noend]{algpseudocode}
\usepackage{braket}
\usepackage{comment}
\usepackage{stmaryrd}
\usepackage{enumitem}

\newcommand{\rvs}[2]{#2}
\providecommand{\Description}[1]{}

\makeatletter
\@ifundefined{ifarxiv}{\newif\ifarxiv\arxivfalse}{}
\makeatother
\ifarxiv
\usepackage{eso-pic}
\fi

\newcommand{\secref}[1]{Sec.~\ref{#1}}
\newcommand{\figref}[1]{Fig.~\ref{#1}}
\newcommand{\tabref}[1]{Table~\ref{#1}}
\newcommand{\alref}[1]{Algorithm~\ref{#1}}

\definecolor{ReviewerA}{RGB}{90,0,130}
\definecolor{Shepherd}{RGB}{30,110,220}

\newtheorem{theorem}{Theorem}
\newtheorem{lemma}[theorem]{Lemma}
\newtheorem{premise}{Premise}

\newtheorem{definition}{Definition}

\DeclareMathOperator*{\argmax}{argmax}
\def\BibTeX{{\rm B\kern-.05em{\sc i\kern-.025em b}\kern-.08em
    T\kern-.1667em\lower.7ex\hbox{E}\kern-.125emX}}
\ifarxiv
\usepackage[hidelinks]{hyperref}
\fi

\begin{document}

\title{Coset Ensemble Decoder for Quantum Error Correction with Algorithm--Hardware Co-Design}

\author{
    \ifarxiv\thanks{Accepted to appear in the 53rd Annual International Symposium on Computer Architecture (ISCA 2026).}\fi
    \IEEEauthorblockN{
        Shuang Liang\IEEEauthorrefmark{1},
        Jubo Xu\IEEEauthorrefmark{1},
        Giulio Bassanino\IEEEauthorrefmark{1},
        Qianzhou Wang\IEEEauthorrefmark{1}, 
        Yidong Zhou\IEEEauthorrefmark{2},
        Yuncheng Lu\IEEEauthorrefmark{1},
        Zhiwen Mo\IEEEauthorrefmark{1}, \\
        Paul H. J. Kelly\IEEEauthorrefmark{1},
        Bo Yuan\IEEEauthorrefmark{2},
        Wayne Luk\IEEEauthorrefmark{1}, and
        Hongxiang Fan\IEEEauthorrefmark{1}
    }
    \vspace{0.15cm}
    \IEEEauthorblockA{\IEEEauthorrefmark{1}Imperial College London, United Kingdom \IEEEauthorrefmark{2}Rutgers University, United States}
    \IEEEauthorblockA{ \{shuang.liang, jubo.xu20, p.kelly, w.luk, hongxiang.fan\}@imperial.ac.uk}
}

\ifarxiv
\newcommand{\arxivartifactbadges}{%
  \AddToShipoutPictureFG*{%
    \AtPageUpperLeft{%
      \put(\LenToUnit{\paperwidth},\LenToUnit{-0.625in}){%
        \makebox[0pt][r]{%
          \href{https://www.acm.org/publications/policies/artifact-review-and-badging-current}{\includegraphics[width=1.35cm]{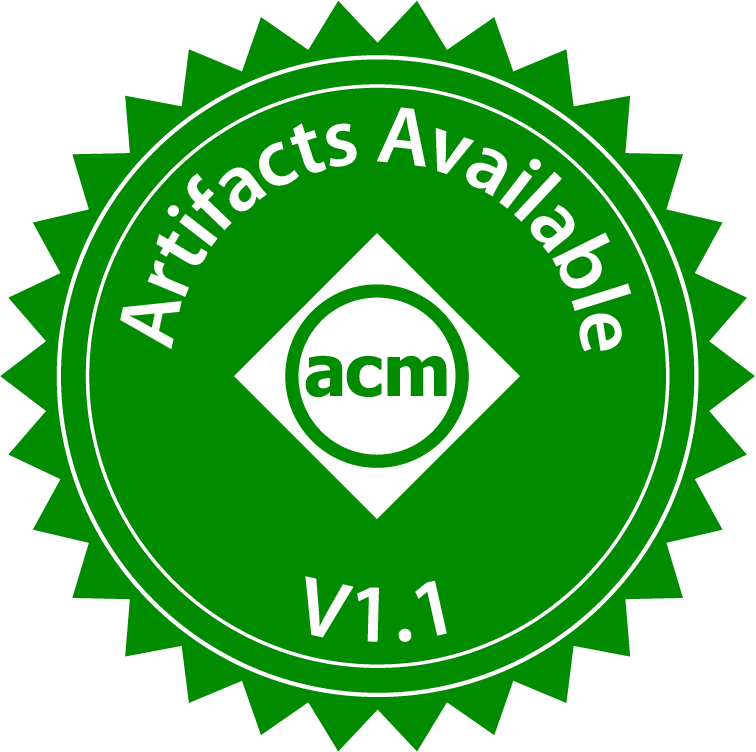}}\hspace{0.18cm}%
          \href{https://www.acm.org/publications/policies/artifact-review-and-badging-current}{\includegraphics[width=1.35cm]{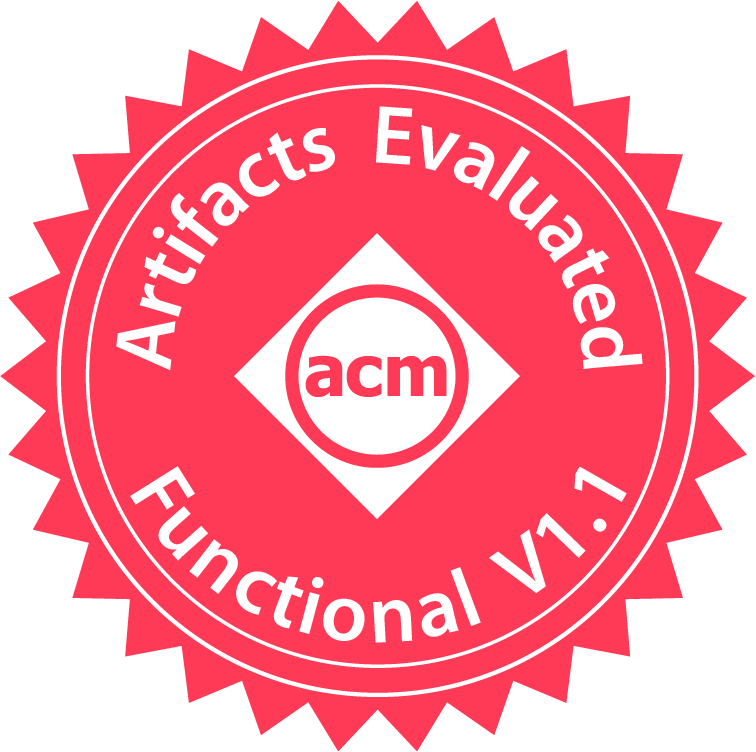}}\hspace{0.18cm}%
          \href{https://www.acm.org/publications/policies/artifact-review-and-badging-current}{\includegraphics[width=1.35cm]{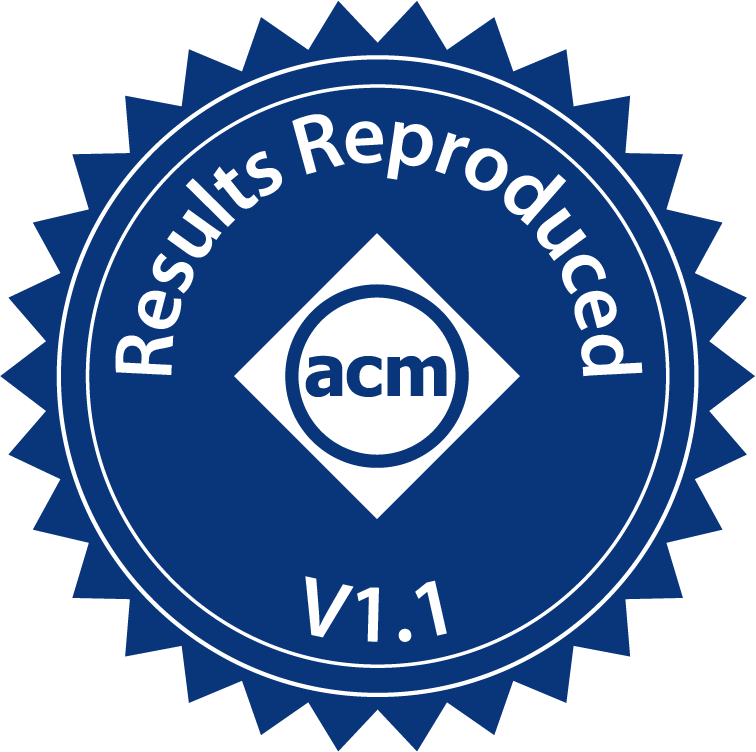}}%
          \hspace{0.30in}%
        }%
      }%
    }%
  }%
}
\fi

\ifarxiv
\arxivartifactbadges
\fi
\maketitle

\begin{abstract}
Reliable large-scale quantum computation relies on fault-tolerant architectures, where quantum error correction (QEC) continuously extracts and decodes error syndromes in real time.
A critical component in QEC is the decoder, a classical subsystem that must simultaneously deliver high logical accuracy and ultra-low latency.
This paper presents a novel algorithm--hardware co-design that improves the accuracy--latency trade-off over existing approaches such as vanilla Minimum-Weight Perfect Matching (MWPM) and Union-Find (UF) decoders.
At the algorithmic level, we introduce {coset ensemble decoding}, which improves UF decoding by explicitly exploiting logically equivalent cosets. 
Our method performs ensemble forest exploration to generate multiple coset-consistent candidates and aggregates them to approximate coset-level maximum-likelihood decoding. 
We further reduce computational and memory complexity via reverse-order elimination and lossless graph compression, without sacrificing accuracy. 
At the hardware level, we design a domain-specific architecture that temporally reuses resources, avoiding the code-distance-proportional resource growth in prior spatial architectures.
Several optimizations, such as multi-bank memory hashing and hierarchical ID mapping, are proposed to mitigate pipeline stalls and memory conflicts under highly concurrent access patterns.
Under a circuit-level depolarizing noise model, our co-design approach achieves a better accuracy--latency trade-off than prior MWPM- and UF-based decoders, while reducing FPGA LUT consumption by up to 8.2 times compared with reported UF-based decoder resources.
The tunable candidate number further exposes a flexible design knob, enabling users to tailor decoding performance to the requirements of different fault-tolerant workloads.
Our implementation is publicly available at \url{https://github.com/IMSeonL/coset-ensemble-decoder}.

\end{abstract}

\begin{IEEEkeywords}
Quantum error correction, coset ensemble decoding, Union-Find decoder, algorithm--hardware co-design, FPGA acceleration
\end{IEEEkeywords}

\section{Introduction}

Recent advances in quantum computing have led to rapid growth in the number of physical qubits.
However, existing quantum devices continue to suffer from gate errors, crosstalk, leakage, and limited coherence times~\cite{kjaergaard2020superconducting}, all of which hinder the execution of practical quantum circuits at scale. 
Achieving practical, large-scale quantum computation therefore requires fault-tolerant quantum computing (FTQC), which relies on quantum error correction (QEC) to encode logical qubits across multiple physical qubits.
QEC continuously extracts error syndromes and corrects errors during runtime~\cite{terhal2015quantum}, suppressing the logical error rate below a desired threshold as guaranteed by the threshold theorem~\cite{aharonov1997ftthreshold, knill1998resilient}.

At the core of QEC is the decoder, a classical subsystem that interprets error syndromes and generates corrections in real time. Designing such decoders is challenging due to the dual demands of high decoding accuracy and ultra-low latency. As quantum processors operate at extremely high speeds, decoders are typically required to process syndromes within less than 1\,\textmu s in superconducting circuits, posing significant challenges for both decoding algorithm design and hardware implementation.
Among various QEC codes, the surface code has emerged as a leading candidate due to its high threshold and local stabilizer structure~\cite{fowler2012surface}. Within the surface code framework,
two prominent decoding strategies have been extensively adopted:  Minimum-Weight Perfect Matching (MWPM)~\cite{dennis2002topological} and Union-Find (UF)~\cite{delfosse2021almost}. 
MWPM achieves high accuracy by solving an optimal matching problem over the syndrome graph, but its high computational cost and inherent sequential nature of computation limit its processing speed. 
In contrast, the UF decoder achieves lower latency by clustering defects in parallel yet sacrifices accuracy under complex noise conditions.

To address these challenges, this work proposes an algorithm--hardware co-design for accurate, low-latency, and flexible QEC decoding.
At the algorithmic level, we introduce coset ensemble decoding
that enhances traditional UF decoding by considering logical-equivalent
cosets, the group of physical errors with the same syndrome
and logical effect.
At the hardware level, we develop a customized architecture tailored to the proposed
algorithm. In contrast to prior work~\cite{wu2025micro,liyanage2023scalable}, which relies
on spatial architectures whose hardware resources grow
with code distance, our hardware features high generality
and efficiency through temporal reuse and targeted pipeline
optimizations.
Together, these choices improve the accuracy--resource trade-off while enabling tunable decoding accuracy for different deployment needs.



This paper makes the following main contributions. The first improves decoding accuracy, while the latter two reduce decoding cost and latency, and improve hardware efficiency.

\begin{itemize} [leftmargin=*]
  \item A novel \emph{coset ensemble decoding algorithm} that leverages logically equivalent cosets to approximate coset-level maximum-likelihood decoding. This is achieved by ensemble forest exploration, which effectively improves the algorithmic performance.

  \item Several \emph{algorithmic optimizations}, including reverse-order elimination and lossless graph compression, together with an optimality analysis of the proposed algorithm, to further reduce computational complexity.

  \item A \emph{domain-specific hardware architecture} tailored for coset ensemble decoding. It achieves high generality and efficiency through temporal reuse and targeted pipeline optimizations. Several hardware optimizations, such as multi-bank memory hashing and hierarchical ID mapping, are proposed to further improve hardware efficiency and reduce latency.
\end{itemize}

\section{Background and Motivation}\label{sec:motivation}

\subsection{Background}

\subsubsection{Quantum Error Correction (QEC)}
\label{sec:background:qec}


QEC employs protocols to protect quantum information from decoherence and noise~\cite{Roffe2019}. An $[[n, k, d]]$ QEC code encodes $k$ logical qubits into $n$ physical qubits ($n>k$), overcoming fundamental constraints, like the No-Cloning Theorem and projective measurements' destructive nature, through non-local encoding and indirect syndrome extraction using ancilla qubits. The code distance $d$, defined as the minimum number of physical qubits supporting an undetectable logical error, quantifies error resilience. The logical error rate can be exponentially suppressed by increasing $d$ if the physical error rate is below a threshold.

\subsubsection{Stabilizer Code}
\label{sec:background:stabilizer}

An $[[n,k]]$ stabilizer code has stabilizer group $\mathcal S\subset \mathcal P_n$ generated by $m=n-k$ independent commuting Pauli operators. One may choose a canonical generating set $\{S_g, T_g, \bar X_j, \bar Z_j\}$~\cite{poulin2008iterativedecodingsparsequantum}, where $S_g$ are stabilizer generators, $\bar X_j,\bar Z_j$ are logical Paulis for the $j$-th logical qubit, and $T_g$ are \emph{pure errors} that generate the pure error group $\mathcal T$ and satisfy $\{T_g,S_g\}=0$ and $[T_g,S_h]=0$ for $h\neq g$.
Moreover, there exists an $n$-qubit Clifford unitary (an encoding circuit) $U$ such that
\begin{alignat}{9}
S_g &= U Z_g U^\dagger, \ & T_g &= U X_g U^\dagger, \quad &&g=1,\cdots,m,\\
\bar Z_j &= U Z_{m+j} U^\dagger, \ & \bar X_j &= U X_{m+j} U^\dagger, \quad &&j=1,\cdots,k,
\end{alignat}

where $Z_g$ and $X_g$ represent Pauli-$Z$ and Pauli-$X$ applied on the $g$-th qubit.
The commutation relation between $T_g$ and $S_{g^{\prime}}$ is given by

\begin{equation}
    S_g\cdot T_{g^{\prime}} =UZ_gU^{\dagger}UX_{g^{\prime}}U^{\dagger} = (-1)^{\delta_{g,g^{\prime}}}T_{g^{\prime}}\cdot S_g
\end{equation}
The logical operators satisfy the Pauli commutation relation $\bar{X}_j\cdot \bar{Z}_{j^{\prime}}=(-1)^{\delta_{j,j^{\prime}}}\bar{Z}_{j^{\prime}}\cdot\bar{X}_j$, and a code word from a stabilizer code space is a state of the form $U(\ket{0}^{\otimes m}\otimes\ket{\psi})$ where $\ket{\psi}\in (\mathbb{C}^2)^{\otimes k}$~\cite{poulin2008iterativedecodingsparsequantum}.

\subsubsection{Surface Code}
\label{sec:background:surface}
Surface code is a promising stabilizer code known for its high error threshold and compatibility with 2D lattice architectures with nearest-neighbor interactions~\cite{dennis2002topological}. As illustrated in \figref{fig:background:surface code}, a single logical qubit is encoded in a $d \times d$ array of physical (data and ancilla) qubits.
A key property is that a data qubit error anti-commutes with adjacent stabilizers of the opposite type: an $X$ error flips two neighboring $Z$ stabilizers, and a $Z$ error flips $X$ stabilizers. This results in non-trivial syndrome signals only at error chain endpoints, crucial for decoding.  To counter complex errors under circuit-level noise~\cite{surfacecodebelowthreshold2024}, decoding employs $d$ syndrome rounds, XOR-ing consecutive syndrome outputs to form detectors for identifying measurement errors~\cite{fowler2012surface}, isolating measurement errors from data qubit errors. Based on this, a 3D graph $G(\mathcal{V}, \mathcal{E})$, where vertices are detector events and edges are potential errors, can be constructed for decoding tasks~\cite{Higgott2025}.

\begin{figure}
    \centering
    \includegraphics[width=1.0\linewidth]{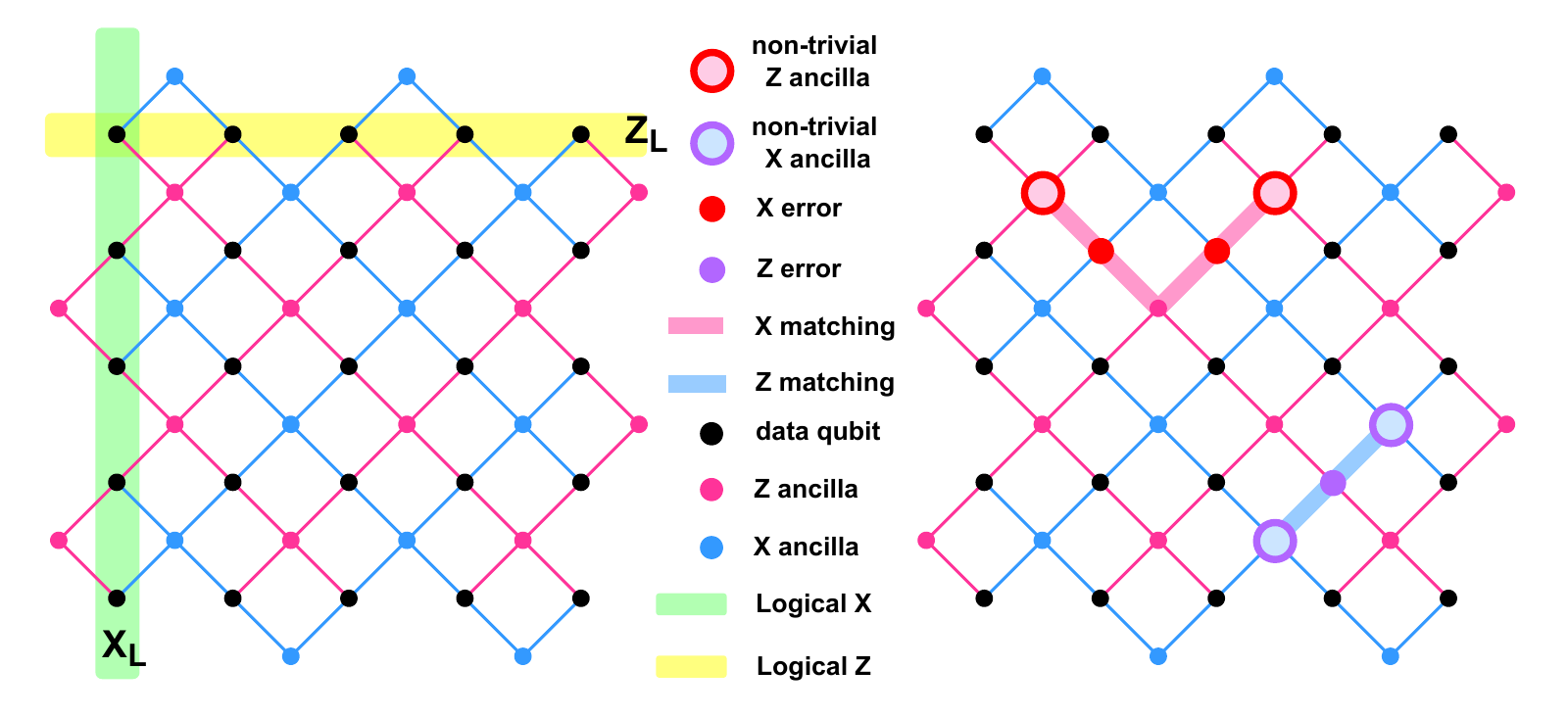}
    \caption{Layout of surface code (left) and syndrome behavior with corresponding matching for $X$ and $Z$ errors on data qubits (right).}
    \label{fig:background:surface code}
    \Description{}
\end{figure}

\subsubsection{Degeneracy}
\label{sec:background:degen}
Stabilizer codes, like Classical Error Correction (CEC), have the property that multiple error patterns can produce the same syndrome measurement~\cite{gottesman1997stabilizercodesquantumerror}. However, different QEC errors may be physically indistinguishable.
In QEC, any error operator $E$ can be decomposed into three components reflecting its stabilizer, pure error, and logical parts respectively as~\cite{poulin2008iterativedecodingsparsequantum}:
\begin{equation}\label{eq:expand of E}
E = s(E)\cdot t(s)\cdot l(E)=(s(E)l(E))\cdot t(s)
\end{equation}
where $s(E)\in\mathcal{S}$, $l(E)\in \mathcal{L}$, $s(E)l(E)\in \mathcal{N}(\mathcal{S})$. Pure error term $t(s)$ depends only on syndrome $s$: $t(s)=\prod_{g=1}^mT_g^{\frac{1-s_g}{2}}$, where $s_g \in \{+1,-1\}$ is the measured eigenvalue of the $g$-th syndrome. Thus, the same syndrome can correspond to distinct error patterns with same (differing only in $s(E)$) or different (differing in both $s(E)$ and $l(E)$) logical errors. Consider the first case, where errors differ only in $s(E)$. This happens if $E_1 = E_2 \cdot S$ with $S \in \mathcal{S}$. Under this condition, $E_1\ket{\psi}=E_2S\ket{\psi}=E_2\ket{\psi}$, and the same correction $R$ reverses both: $RE_2\ket{\psi}=\ket{\psi}\Rightarrow RE_1\ket{\psi}=RE_2S\ket{\psi}=RE_2\ket{\psi}=\ket{\psi}$. Such cases define indistinguishable degenerate errors, differing in physical error pattern but sharing both syndrome and logical effect, that can be grouped into the same logical-equivalent coset defined as $\{E|E=S_gt(s)L,\forall S_g\in\mathcal{S}\}$, which significantly impact decoding~\cite{9456887} (explained in Sec. ~\ref{sec:background:decoding}).

\subsubsection{Optimal Decoding (Coset vs. Physical)}
\label{sec:background:decoding}
Similar to CEC, it is straightforward to find the most probable physical error from syndrome $s$ via Maximum-Likelihood (ML) decoding:
\begin{equation}
    E^{*}=\argmax_{E\in\mathcal{P}_n}\{p(E|s)\}
\end{equation}
However, degeneracy implies that the most likely physical error may not yield the most probable logical error, since multiple equally probable errors can have the same logical effect~\cite{Stace2010}. For example (just for illustration), as shown in \figref{fig:background:coset example}, with non-trivial syndromes $\{1,2,3,4\}$, error patterns can be matched as $\{1,4\}\cup\{2,3\}$ or $\{1,2\}\cup\{3,4\}$, each implying a different logical error. Both matchings contain 6 errors (edges) and therefore have the same weight, which means they are equally probable. However, the second matching includes 9 combinations with equivalent probability, so the total probability of the second logical error would be higher than that of the first. Since the two matchings belong to different logical cosets, and the second has higher probability, we choose an error pattern from the second coset---even though a pattern in the first also has maximal individual likelihood. Thus, rather than identifying the most likely physical error, optimal decoding should find the most likely logical coset~\cite{poulin2008iterativedecodingsparsequantum}:
\begin{equation}\label{equation:coset ML}
        \begin{aligned}
            L^{*} &= \argmax_L\{p(L|s)\} \\
            p(L|s) &\propto \sum_{E:l(E)=L}p(E)= \sum_{S\in \mathcal{S}}p(E=St(s)L)
        \end{aligned}
\end{equation}

\begin{figure}
    \centering
    \includegraphics[width=210pt]{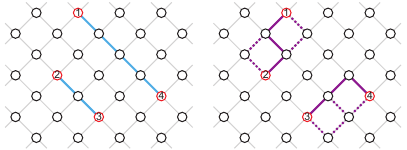}
    \caption{Two matching examples: the blue (left) belongs to coset 1 with logical error $L_b$, the purple (right) to coset 2 with logical error $L_p$. Dashed lines indicate alternative error paths.}
    \label{fig:background:coset example}
    \Description{}
\end{figure}

\subsubsection{Prior Decoding Algorithms}
\label{sec:background:decodingalgorithm}
The two main surface code decoders are Minimum-Weight Perfect Matching (MWPM) and Union-Find (UF). MWPM addresses $\argmax\{p(E|s)\}$ by solving the minimum-weight perfect matching problem on the decoding graph via the Blossom algorithm~\cite{wu2025micro}, which formulates the problem in a linear programming (LP) framework~\cite{wu2023fusionblossomfastmwpm}. 
While this LP-based iterative procedure yields near-optimal decoding accuracy by maintaining \textit{Blossoms}, it incurs substantial algorithmic and implementation complexity~\cite{Kolmogorov2009BlossomVA}, as well as relatively large decoding latency for real-time scenarios. In contrast, UF, an approximation of MWPM, is a simplified and near-linear-time algorithm~\cite{delfosse2021almost} that clusters syndromes by growing and merging them until all clusters have even parity, then resolves matches via spanning tree generation and peeling. While UF offers much lower latency due to its simplicity and high parallelism~\cite{das2022afs,Liyanage2024}, this comes at the cost of suboptimal accuracy compared to MWPM.

\subsection{Motivation}

Prior hardware decoders (i) neglect the impact of degeneracy~\cite{Higgott2025,delfosse2021almost}, and (ii) often rely on highly customized designs to achieve ultra-low latency, resulting in poor scalability~\cite{Liyanage2024}.
Our work revisits both the decoding theory and the hardware architecture, leading to a coset-ensemble algorithm and a stall-resilient pipelined architecture.

\subsubsection{\textbf{Algorithmic Challenge: Coset vs.\ Physical}}

Prior decoders implicitly frame decoding as a physical ML problem: given a syndrome, they select the most likely physical error chain. In highly degenerate stabilizer codes, many distinct chains belong to the same logical coset, and the single most likely chain does not necessarily correspond to the most probable logical error. As a result, the decoder is optimized for recovering a particular localized configuration, while system-level reliability is determined by the logical error rate.

\noindent\textbf{Insight and Our Approach.}
Given a syndrome, the ideal decision selects the coset with the highest posterior probability by summing over all compatible physical errors. This coset maximum-likelihood rule directly optimizes the logical error rate but may have exponential complexity in the code size~\cite{Hsieh2011}.

Inspired by the above insight, we approximate the intractable coset ML issue with our \emph{Ensemble Forest Exploration}. By using UF-equivalent \textit{Clustering} to partition intractable cosets and independently injecting random priority during the forest construction, the decoder generates a set of candidate corrections that implicitly ``vote'' for different logical cosets. Aggregating these outcomes at the logical level yields an approximate ranking of cosets in polynomial time.

\subsubsection{\textbf{Architectural Challenge: Scalability vs.\ Latency}}
Grid-mapped hardware decoders achieve extremely low latency but suffer from poor scalability and efficiency. A more scalable Von Neumann-style organization naturally decomposes the decoder into two stages: a \emph{pipelined clustering engine} that streams one vertex per cycle, followed by \emph{post-clustering traversal modules} (e.g., spanning-tree construction and peeling in conventional UF) that run once clustering terminates. Profiling this two-stage baseline in \figref{fig:motivation:stall} reveals that the dominant bottleneck lies inside the clustering pipeline: stalls in the clustering pipeline alone account for 48\%--58\% of the total decoding latency.

\noindent\textbf{Insight and Our Approach.}
Targeting an efficient decoder, we observe that the vast majority of stalls arise from two dominant patterns of concurrent memory accesses during cluster growth and merging stages.

Guided by this observation, we co-design a conflict-aware multi-bank hashed memory system and a hierarchical ID mapping scheme to resolve these two classes of conflicts and keep the pipeline highly utilized.

\begin{figure}
    \centering
    \includegraphics[width=0.95\linewidth]{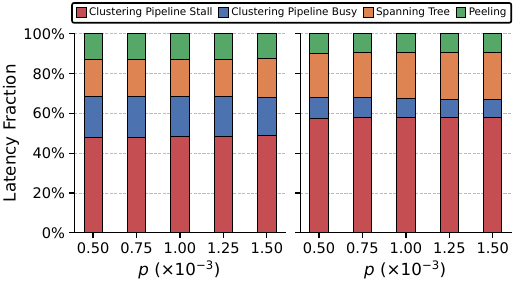}
    \caption{Latency breakdown for code distances 3 (left) and 11 (right) on a
    two-stage baseline without optimizations.
    Clustering Pipeline Stall and Clustering Pipeline Busy are cycles
    spent inside the \emph{pipelined clustering engine} (stalled vs.\
    productive); Spanning Tree and Peeling are cycles spent in the
    \emph{post-clustering traversal modules}.}
    \label{fig:motivation:stall}
    \Description{}
\end{figure}

\section{Coset Ensemble Decoder}

This section presents three algorithmic contributions that together define our coset ensemble decoding procedure (\alref{alg:overview}): \emph{ensemble forest exploration} (\secref{sec:meth:forest}), \emph{reverse-order elimination} (\secref{sec:meth:roe}), and \emph{lossless graph compression} (\secref{sec:meth:compression}).

\subsection{Ensemble Forest Exploration}
\label{sec:meth:forest}

\subsubsection{Algorithm Overview}

We adopt the coset viewpoint of stabilizer decoding: any error $E$ can be decomposed as $E=s(E)\,t(s)\,l(E)$, 
where the syndrome $s$ fixes $t(s)$ and ambiguity concentrates in the logical/coset choice.
To approach the coset-level maximum-likelihood decision $\argmax_L\{p(L|s)\}$, 
we sample a keyed \emph{priority function} $\phi$ over vertices and incident edges, construct one deterministic forest per priority sample, and obtain its candidate correction through linear-time ROE.
Repeating this process over $K$ independent samples yields coset-consistent candidates whose logical outcomes are aggregated by voting, 
as shown in \alref{alg:overview} and \alref{alg:priority_forest}.

\begin{algorithm}[t]
\caption{Coset Ensemble Decoder}
\label{alg:overview}
\small
\begin{algorithmic}[1]
\Require Syndrome parity $s$; decoding graph $G=(\mathcal{V},\mathcal{E})$; candidate number $K$; seeds
\Ensure Final correction $\hat{E}\subseteq \mathcal{E}$

\Statex \textbf{Phase I: Clustering}
\State $\hat{G} \gets \Call{Clustering}{G,s}$

\Statex \textbf{Phase II: Ensemble Forest Exploration}
\State $E \gets \varnothing$, $L \gets \varnothing$ 
\For{$i = 1$ \textbf{to} $K$}
  \ForAll{$(v,e) \in \hat{G}$}
    \State $\phi(v,e) \gets \mathrm{HashToUnit}(\mathrm{seed}, i, v, e)$
  \EndFor
  \State (\texttt{parent}, $\sigma$) $\gets$ \Call{PriorityForests}{$G, \phi$} \Comment{\alref{alg:priority_forest}}
  \State $\{E_i, L_i\} \gets \Call{ROE}{\texttt{parent}, \sigma, s}$ \Comment{\alref{alg:roe}}
  \State $E \gets E \cup \{E_i\}$; \quad $L \gets L \cup \{L_i\}$
\EndFor
\State $\hat{E} \gets \Call{MajorVote}{E,\,L}$ \Comment{on smallest-$|E_i|$ subset}
\State \Return $\hat{E}$
\end{algorithmic}
\end{algorithm}

\begin{algorithm}[t]
\caption{PriorityForests}
\label{alg:priority_forest}
\small
\begin{algorithmic}[1]
\Require Graph $G=(\mathcal{V},\mathcal{E})$; Priorities $\phi:\mathcal{V}, \mathcal{E} \to(0,1)$
\Ensure Array \texttt{parent}: $\mathcal{V}\to \mathcal{V}\cup\{\texttt{NIL}\}$; discovery order $\sigma$
\State \texttt{visited}$[v]\gets \textbf{false}$, \texttt{parent}$[v]\gets \texttt{NIL}$ for all $v\in \mathcal{V}$; \quad $\sigma\gets [\ ]$; $Q\gets$ \Call{Queue}{}
\State $\Pi_{\mathcal{V}} \gets \mathcal{V}$ sorted by ascending $\phi$
\ForAll{$u\in \Pi_{\mathcal{V}}$}
  \If{not \texttt{visited}$[u]$}
    \State  \Call{Enqueue}{$Q,u$}; \ \texttt{visited}$[u]\gets \textbf{true}$; \ \Call{Push}{$\sigma,u$}
    \While{not \Call{Empty}{$Q$}}
      \State $x\gets$ \Call{Dequeue}{$Q$}
      \State $\mathrm{Adj}\gets$ \Call{Adj}{$G,x$} sorted by ascending
      \ForAll{$y\in \mathrm{Adj}$}
        \If{not \texttt{visited}$[y]$}
          \State \texttt{parent}$[y]\gets x$; \ \texttt{visited}$[y]\gets \textbf{true}$; \ \Call{Enqueue}{$Q,y$}; \ \Call{Push}{$\sigma,y$}
        \EndIf
      \EndFor
    \EndWhile
  \EndIf
\EndFor
\State \Return (\texttt{parent}, $\sigma$)
\end{algorithmic}
\end{algorithm}

\begin{algorithm}[t]
\caption{Reverse-Order Elimination (ROE)}
\label{alg:roe}
\small
\begin{algorithmic}[1]
\Require Array \texttt{parent}; discovery order $\sigma$; parity $s\in\{0,1\}^{V}$
\Ensure Correction $E_{i}\subseteq \mathcal{E}$
\State $E_{i}\gets \emptyset$; \quad $p\gets s$
\For{$t=|\sigma|$ \textbf{down to} $1$}
  \State $x\gets \sigma_t$; \quad $r\gets$ \texttt{parent}$[x]$
  \If{$r\neq \texttt{NIL}$ \textbf{and} $p[x]=1$}
    \State $E_{i}\gets E_{i}\cup\{(x,r)\}$ 
    \State $p[x]\gets p[x]\oplus 1$; \quad $p[r]\gets p[r]\oplus 1$
  \EndIf
\EndFor
\State $L_i$ = \Call{DecodeLogical}{$E_i$}
\State \Return $\{E_{i}, L_{i}\}$
\end{algorithmic}
\end{algorithm}

\subsubsection{Proof of Approximation to Optimality}
As shown in Eq.~\ref{equation:coset ML}, the optimal ML decoding over logically equivalent cosets is computationally challenging. This is because the size of the Abelian group generated by the stabilizers grows exponentially with the number of stabilizers, and calculating the maximum-likelihood requires summation over this exponentially large group. Our approach approximates this optimal decoding by first partitioning the stabilizer group via clustering and then solving the sub-optimal coset decoding problem over the resulting partitioned cosets. To formalize this approximation, let's first introduce the following definitions and premises:

\begin{definition}[Syndrome graph]
    \label{definition:input graph}
    We define a \emph{syndrome graph} as an undirected connected graph $G(\mathcal V,\mathcal E)$ such that $\mathcal V$ can be partitioned into $\mathcal{V}_t$ and $\mathcal{V}_{nt}$ such that $\mathcal{V}_t = \{v |s(v)=+1 \}$ is called the set of trivial syndromes, $\mathcal{V}_{nt}=\{v|s(v)=-1\}$ the set of non-trivial syndromes, and the size of non-trivial syndromes is even. If the size of non-trivial syndromes is zero, we call the syndrome graph trivial. The input decoding graph of our algorithm is a non-trivial syndrome graph.

\end{definition}

\begin{definition}[Clustering]
    \label{definition:clustering}

    Clustering takes in the input decoding graph $G(\mathcal V,\mathcal E)$ and outputs a partition of sub-graphs $\mathcal C = \{G_i(\mathcal V_i,\mathcal E_i)\}$ such that each $G_i$ is a syndrome graph and precisely one $G_i$ is trivial.

\end{definition}

\begin{premise}\label{premise:before majorvote}
    After clustering, each non-trivial syndrome graph is sent to ensemble-forest-exploration. The $K$ independent priority samples induce $K$ error ensembles $\{E_i, L_i\}_{i=1}^K$, where $E_i\in\mathcal{P}_n$ and $L_i$ is the corresponding logical error.
\end{premise}

We now present two lemmas to substantiate the claim that our algorithm solves a sub-optimal coset ML problem. The first lemma demonstrates that the errors of candidate error ensembles are degenerate, forming logically equivalent cosets. The second lemma analyzes the algorithm's asymptotic optimality within the partitioned solution space after clustering.
\begin{lemma}\label{lemma:degeneracy}
    For $K$ error ensembles $\{E_i, L_i\}$, the $E_i$ with equal $L_i$ are degenerate errors and belong to the same logical equivalent coset of the stabilizer group $\mathcal{S}$.
\end{lemma}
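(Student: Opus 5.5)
The plan is to show that every candidate $E_i$ produced by ensemble forest exploration has the same syndrome $s$. Once that holds, the decomposition in Eq.~(\ref{eq:expand of E}) immediately gives $E_i = s(E_i)\,t(s)\,l(E_i)$ with a common pure-error part $t(s)$. Hence $E_i$ and $E_j$ with $L_i=L_j$ differ only by an element of $\mathcal S$, which is exactly the definition of degenerate errors in the same logical-equivalent coset from \secref{sec:background:degen}. So the work splits into two steps: (i) syndrome consistency of each candidate, and (ii) a short group-theoretic conclusion.

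\textbf{Step (i).} Fix one priority sample $i$ and one non-trivial cluster $G_c$ from Definition~\ref{definition:clustering}. By construction, \alref{alg:priority_forest} outputs a spanning forest of $G_c$ (one tree per connected component), together with a discovery order $\sigma$ in which every vertex appears after its parent. I will argue by induction along the reverse order $t=|\sigma|,\dots,1$ in \alref{alg:roe}. When a vertex $x$ with a parent is processed, all of its children come later in $\sigma$ and have therefore already been processed. Since each child flips $p[x]$ at most once, through its own edge, $p[x]$ is final after the check at $x$, and that check sets $p[x]=0$. Consequently, at termination every non-root vertex has residual parity $0$. Each edge addition flips two parities, so the total parity of each tree is invariant. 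Definition~\ref{definition:input graph} gives an even number of non-trivial vertices; if clustering makes each tree a connected even-parity syndrome graph, the root's residual parity is also $0$. Hence the boundary of $E_i$ equals $s$ on every cluster, and the union over clusters (the trivial cluster contributes no edges) reproduces the full syndrome. Note that the candidate does not depend on $\phi$ in this respect; $\phi$ only changes which forest is chosen.

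\textbf{Step (ii).} Because $E_i$ and $E_j$ produce the same syndrome, $E_iE_j^{\dagger}$ commutes with all stabilizers, so $E_iE_j^\dagger\in\mathcal N(\mathcal S)$. Writing both via Eq.~(\ref{eq:expand of E}), $t(s)$ cancels and $E_iE_j^\dagger = s(E_i)s(E_j)^{\dagger}\,l(E_i)l(E_j)^\dagger$ up to phase. If $L_i=l(E_i)=l(E_j)=L_j$, the logical parts cancel, leaving $E_i=S\,E_j$ with $S\in\mathcal S$. Therefore $E_i\in\{S\,t(s)\,L_i : S\in\mathcal S\}$, the same coset as $E_j$. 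The degeneracy argument of \secref{sec:background:degen} then shows that a single correction reverses both.

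\textbf{Main obstacle.} The hardest part is the parity of the roots in step (i). It requires that each connected component of each cluster, and hence each tree of the forest, carries even non-trivial parity. This is guaranteed only if every cluster is connected as a syndrome graph, or if boundary vertices are treated appropriately. I would first invoke Definition~\ref{definition:clustering}, which states that each $G_i$ is a (connected) syndrome graph with an even number of non-trivial vertices. If boundary nodes are present, I would model the boundary as a virtual vertex whose parity absorbs any oddness, and root trees there so that residual parity at a root is harmless. A secondary subtlety is identifying $L_i$ computed by \textsc{DecodeLogical} with $l(E_i)$. I would take this as the defining property of that routine.
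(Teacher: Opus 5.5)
Your proposal is correct and follows the paper's route: the paper's proof is essentially your step (ii), writing both candidates as $S_k\,t(s)\,L$ via the decomposition $E=s(E)\,t(s)\,l(E)$ and computing $E_1E_2^{\dagger}=(S_1t(s)L)(S_2t(s)L)^{\dagger}=S_1S_2\in\mathcal S$. The one difference is that the paper simply asserts that the candidates ``derive from the same syndrome measurement pattern,'' whereas your step (i) proves this syndrome consistency, using the parent-before-child order of $\sigma$ and the per-tree parity invariance under ROE, and so makes explicit a premise the paper leaves implicit.
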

\begin{proof}\label{proof:degeneracy lemma}
    As shown in Eq.~\ref{eq:expand of E}, $s(E)$ is a stabilizer term that could deform the error chain, or, equivalently, affect the matching in surface code. $t(s)$ corresponds to the pure error operator, which only depends on the syndrome measurement $s$, and $l(E)$ is the operator of the $E$'s logical error. Therefore, for any two error ensembles $\{E_1, L\}$ and $\{E_2, L\}$ with the same logical error, they both derive from the same syndrome measurement pattern, so their error expansions form the same $t(s)$ and $l(E)$. Because $L$, $S$, and $t(s)$ are all Hermitian and unitary,
    \begin{equation}
    E_1E_2^{\dagger}=\left(S_1t(s)L\right)\left(S_2t(s)L\right)^{\dagger}=S_1S_2
    \end{equation}
    Therefore, it's proved that for all error ensembles with the same logical error, their errors $E_i$ are degenerate and belong to the same logical-equivalent coset. 

\end{proof}

\begin{lemma}\label{lemma:sub-coset}
    The clustering $\mathcal{C}$ reformulates the global optimal coset decoding problem $\argmax_L\{p(L|s)\}$ by solving a locally optimal Maximum-Likelihood problem
    \begin{equation}
    \argmax_L\left\{\sum_{b\in\mathcal B_c}p(E=S(b)t(s)L)\right\}
    \end{equation}
    where for index $c$ running over all non-trivial syndrome graphs in the cluster,
    \begin{equation}\label{eq:287}
    \mathcal B_c = \{b\in\mathbb F_2^m | b_g=0\text{ for all } g \text{ with } s_g\notin G_c\}
    \end{equation}
    is the set of all $m$-bit strings whose support is restricted to the indices in the cluster.

    \rvs{with $b\in \mathbb{F}_2^m$, $S(b): \mathbb{F}_2^m\mapsto \mathcal{S}$, and 
    $\mathcal{B}=\bigsqcup_c\mathcal{B}_c\subset \mathbb{F}_2^m$, where $\mathcal{B}_c=\{b|b=\bigoplus_{g=1}^m\hat{b}\cdot\llbracket s_g\in\mathcal{G}_c\rrbracket\cdot (1\ll(g-1)), \ \hat{b}_g\in\{0,1\}\}$.
    Here, each $b$ is a $m$-bit bitstring, with its $g$-th bit $b_g$ equal to 0 or 1 if syndrome $s_g$ is contained in cluster $\mathcal{G}_c$, and only equal to 0 otherwise. Therefore, if $\mathcal{G}_c$ has $r$ syndromes (both trivial and non-trivial), the corresponding $\mathcal{B}_c$ would have $2^r$ elements.}{}
\end{lemma}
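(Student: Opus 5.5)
The plan is to parametrize the stabilizer group by bitstrings and show that clustering restricts the coset sum to a product of cluster-local sums.

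\textbf{Step 1: rewrite the global objective.} Fix a basis of independent generators $S_1,\dots,S_m$ and set $S(b)=\prod_g S_g^{b_g}$. This is a group isomorphism $\mathbb F_2^m\to\mathcal S$ (up to phases, which do not affect $p$). Eq.~\ref{equation:coset ML} then reads
\begin{equation*}
p(L|s)\propto\sum_{b\in\mathbb F_2^m}p\bigl(E=S(b)t(s)L\bigr).
\end{equation*}
Each generator $S_g$ is identified with the detector vertex carrying $s_g$. By Definition~\ref{definition:clustering}, the partition $\mathcal C$ assigns each index $g$ to exactly one cluster $G_c$. Hence $\mathcal B=\bigsqcup_c\mathcal B_c$ is a direct-sum decomposition, meaning every $b$ splits uniquely as $b=\bigoplus_c b^{(c)}$ with $b^{(c)}\in\mathcal B_c$.

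\textbf{Step 2: what clustering does.} The candidate corrections produced after clustering are supported only on edges internal to the clusters. This follows from Premise~\ref{premise:before majorvote}: \alref{alg:roe} only adds tree edges $(x,\texttt{parent}[x])$ inside a cluster's forest. Two candidates $E_i,E_j$ built this way differ only by cycles inside clusters. By Lemma~\ref{lemma:degeneracy}, if they share the same $L$, then $E_iE_j^\dagger$ is a stabilizer. My claim is that this stabilizer is a product of generators lying in the same cluster, i.e.\ it equals $S(b)$ with $b\in\bigoplus_c\mathcal B_c$ restricted per cluster. The decoder is therefore blind to stabilizer deformations that cross cluster boundaries. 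The clustered problem accordingly replaces the full sum with a sum over deformations supported within a single non-trivial cluster $G_c$ (the trivial cluster contributes only the identity).

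\textbf{Step 3: factorize the likelihood.} Under the independent (circuit-level Pauli) noise model, $p(E)=\prod_{e}p(E_e)$ factorizes over edges. Because clusters are edge-disjoint after removing inter-cluster boundary edges, which the correction never touches, the quantity $p(S(b)t(s)L)$ for $b\in\mathcal B_c$ factorizes into a $c$-dependent factor times a factor that is constant in $b$ and in $L$ whenever $L$ is determined by the cluster. Taking $\argmax_L$, the constant factors drop out, and the objective becomes $\argmax_L\sum_{b\in\mathcal B_c}p(S(b)t(s)L)$ for each non-trivial cluster. This is the stated local ML problem.

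\textbf{Main obstacle.} The hardest part is Step 2/3: making rigorous that the only relevant stabilizers are those supported in one cluster, and that the boundary factor is independent of $L$. A logical operator can span several clusters, or attach to a code boundary, so the logical label $L$ need not be local. I would first handle this by assuming each non-trivial cluster either contains no boundary, or that its logical effect is assigned to that cluster alone. Under that assumption the factorization goes through. If it fails, I would weaken the statement to an approximation, namely a lower bound of $p(L|s)$ by the restricted sum, since $\mathcal B_c\subseteq\mathbb F_2^m$ and all terms are nonnegative. This justifies calling the result a "sub-optimal" or "locally optimal" coset decoding problem.
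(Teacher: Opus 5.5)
The step that carries the lemma in your argument is the same one the paper relies on, and the extra machinery you add to derive it does not hold. Your Step~1 matches the paper's parametrization (it writes $S=U(Z_m^{b_m}\cdots Z_1^{b_1})U^\dagger$). The paper then obtains $\mathcal B_c$ purely by assertion: after clustering, the chain ``could only be modified locally within each cluster,'' so $b_g$ is set to $0$ outside it, with the caveat that this only sets an optimality bound. Your Step~3 cannot supply this reduction. It starts from $b\in\mathcal B_c$ and, after dropping a factor, returns the same sum, so the passage from $\mathbb F_2^m$ to $\mathcal B_c$ rests entirely on Step~2.

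Step~2's claim is false in general:
\begin{itemize}
\item A cycle of the decoding graph is a product of the \emph{opposite-type} checks, i.e.\ plaquettes (2-cells) of that graph. It is not a product of the detector vertices the cluster contains.
\item Take a ring-shaped cluster. The two arcs joining a defect pair can both arise from ROE under different priorities. They differ by a contractible loop, and every plaquette decomposition of that loop uses plaquettes outside the cluster.
\end{itemize}
So what the ensemble explores lies in $E_1$ times the cycle space of the clusters' edge sets, which is not $S(\mathcal B_c)$.

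Two further claims need correcting.

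In Step~1, $\mathcal C$ partitions all of $\mathcal V$, trivial cluster included, so $\bigoplus_c\mathcal B_c=\mathbb F_2^m$. Your direct-sum decomposition therefore restricts nothing. The lemma's $\mathcal B$ is the \emph{union} of the $\mathcal B_c$ over non-trivial clusters, with one local problem per cluster.

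In Step~3, the correction avoiding inter-cluster edges says nothing about the support of $S(b)$ or of $t(s)$. Generators at the rim of a cluster act on edges outside it, and pure errors are in general non-local strings. Hence the outside factor is constant in neither $b$ nor $L$. Indeed, $\sum_{b\in\mathcal B_c}p(S(b)t(s)L)$ is not invariant under $t(s)\mapsto S(b')t(s)$ with $b'\notin\mathcal B_c$, whereas the full coset sum is. The restricted problem therefore depends on the chosen representative.

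What you can state rigorously is your fallback, $\sum_{b\in\mathcal B_c}p(S(b)t(s)L)\le\sum_{b\in\mathbb F_2^m}p(S(b)t(s)L)$, which is consistent with the paper's closing remark. To make the local problem well posed, use a cluster-supported candidate $E_i$ as the coset reference instead of $t(s)L$. Even then, restricting from $\mathbb F_2^m$ remains, exactly as in the paper, a modeling choice of the algorithm rather than a consequence of the global objective.
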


\begin{proof}\label{proof:sub-coset}
    Given a stabilizer code and the syndrome measurement result $s$, the optimal decoding is to find the logical-equivalent coset with the highest probability by solving Eq.~\ref{equation:coset ML}. Since the operator of each stabilizer generator $S_g$ can be constructed as $S_g = UZ_gU^{\dagger}$, and each stabilizer in $\mathcal{S}$ is the product of several stabilizer generators, the stabilizer $S$ can be written as:
    \begin{equation}
        \begin{aligned}
            S &= UZ_m^{b_m}U^{\dagger}UZ_{m-1}^{b_{m-1}}U^{\dagger}\cdots UZ_1^{b_1}U^{\dagger} \\
            \ &= U\left(Z_m^{b_m}Z_{m-1}^{b_{m-1}}\cdots Z_1^{b_1}\right)U^{\dagger}, \quad b_g\in\{0,1\}
        \end{aligned}
    \end{equation}
    If $b_g=1$, the $g$-th stabilizer generator is multiplied by the pure error term, which would cause a local deformation on the error chain. Therefore, if an $m$-bit bitstring $b$ is constructed by concatenating $b_g$ as $b=\bigoplus_{g=1}^m b_g\cdot(1\ll(g-1))$, the error operator will only depend on the value of this bitstring and can be rewritten as $E=S(b)t(s)L$. The activation of $g$-th bit of $b$ represents the contribution of $S_g$ on deforming the final error chain. In this case, the original coset probability becomes:
    \begin{equation}
        \sum_{S\in\mathcal{S}}p(E=St(s)L)\equiv \sum_{b\in\mathbb{F}_2^m}p(E=S(b)t(s)L)
    \end{equation}
    After clustering $\mathcal{C}$, the error chain could only be modified locally within each cluster. This implies that only the non-trivial (activated) syndromes within a cluster contribute to its error deformation\rvs{, while syndromes outside all clusters are deactivated}. Given the one-to-one correspondence between each bit $b_g$ of $b$, a syndrome bit $s_g$, and a stabilizer $S_g$, $b_g$ can vary (0 or 1) or deactivate (set to 0) if its corresponding syndrome $s_g$ is inside or outside any cluster $\mathcal{G}_c$. Each cluster thus defines a local configuration. The space of valid bitstrings is a subset of $\mathbb{F}_2^m$, which is the union of the spaces $\mathcal{B}_c$ as given in~\eqref{eq:287}.
    In this framework, the clustering $\mathcal{C}$ approximates the original global optimization problem from a locally optimal version by partitioning the stabilizer space into activated and deactivated regions\rvs{, which sets an optimality upper bound}.
\end{proof}

Based on the preceding lemmas, the error ensembles produced contain degenerate errors that can be grouped into logically equivalent cosets based on their logical errors. Under the priority-sampling distribution, the sample frequency of a logical outcome estimates its probability mass within the partitioned candidate space, which can be represented as $\tilde{p}(L_i|s)= \frac{n_{L_i}}{K}$, where $n_{L_i}$ is the number of ensembles with logical error $L_i$. The \Call{MajorVote}{} thus identifies the most frequently sampled coset and approximates the sub-optimal coset ML problem as
\begin{equation}
\label{eq:majorvote}
    \argmax_{L_i}\{\tilde{p}(L_i|s)\}= \argmax_{L_i}\left(\frac{n_{L_i}}{K}\right)
\end{equation}
A final correction can then be chosen arbitrarily from this coset due to the degeneracy among candidates. Moreover, as the candidate number $K\rightarrow\infty$, the sampling estimate converges within this partitioned candidate space, though performance remains bounded below the original optimal coset ML due to the clustering constraint. In practice, this vote is restricted to the candidates with the smallest correction size $|E_i|$. This empirical refinement improves accuracy and reduces to Eq.~\eqref{eq:majorvote} when all candidates share the same correction size.

\subsubsection{Relationship to UF and MWPM} 
Our coset ensemble decoder typically sits between UF and MWPM. It leverages UF's efficient clustering but critically advances it by introducing a coset-decoding step, which produces multiple error ensembles to identify the most probable logical coset and thus outperforms UF. Directly contrasting with MWPM reveals the impact of the clustering stage. MWPM performs maximum-likelihood (ML) decoding on physical errors, whereas our coset ensemble decoder solves a constrained version of the coset maximum-likelihood problem, and the solution space, a subset of all cosets defined by the clustering, is sub-optimal. Consequently, our decoder may exceed MWPM's accuracy only when the cluster structure aligns with the error structure that MWPM's \textit{Blossom} would capture.

\subsection{Reverse-Order Elimination (ROE)}
\label{sec:meth:roe}

Given the parent array and discovery order $\sigma$ from \secref{sec:meth:forest}, ROE scans vertices in the exact reverse of $\sigma$ and pops vertices in order as shown in \alref{alg:roe}.
This eliminates global leaf detection and degree recomputation, delivering a single-pass, linear-time peeling that helps reduce the decoding latency.

The key observation is that \alref{alg:priority_forest} has already traversed the graph
from roots to leaves during forest construction. By recording this order and
popping vertices in reverse, ROE reuses that traversal and avoids an additional
pass for leaf discovery.

\subsection{Lossless Graph Compression}
\label{sec:meth:compression}

\begin{figure}
    \centering
    \includegraphics[width=235pt]{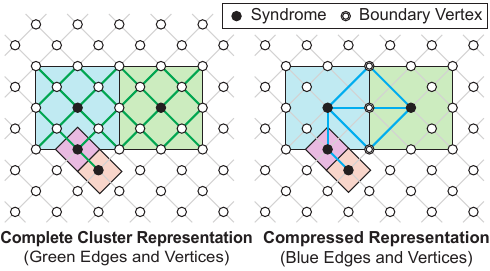} 
    \caption{Graph compression.}
    \label{fig:al:compress}
    \Description{}
\end{figure}

To reduce the additional exploration efforts introduced by \secref{sec:meth:forest}, we apply structure-preserving reductions with smaller complexity.
The example in \figref{fig:al:compress} illustrates how we obtain the compressed
graph structure after clustering. The four colored regions represent the
clusters grown from four initial root vertices. On the left, all vertices and
green edges inside the colored regions constitute the input graph
$G(\mathcal{V}, \mathcal{E})$ of \alref{alg:overview}. Since the complexity of \alref{alg:overview} is linear in
the size of the input graph, graph pruning that preserves its structural
information helps reduce the decoding latency.

In this work, we use the compressed graph structure shown on the right. Unlike
the complete graph on the left, we retain only the edges between roots and the
edges between roots and boundaries during merging. This edge representation,
which goes beyond axis-aligned Manhattan connections and allows edges to link
vertices in arbitrary directions across the grid, together with the pruning of
redundant edges, preserves the core structure of the graph while remaining
fully compatible with the main dataflow of Algorithm~1. In
\figref{fig:al:compress}, the straightforward representation uses an input graph
of size 21, whereas our compressed representation reduces this number to 8.

\section{Hardware Architecture}

\begin{figure*}
    \centering
\includegraphics[width=515.52pt]{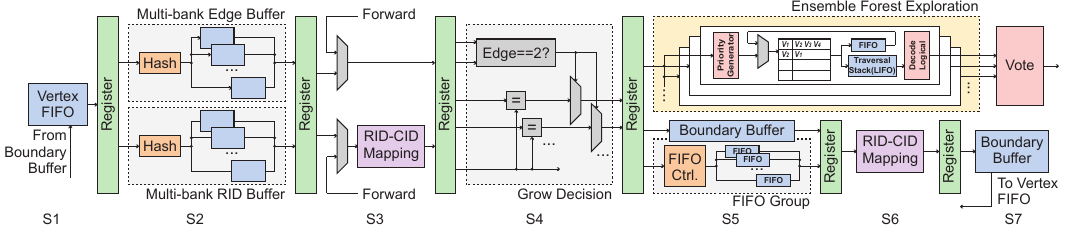} 
    \caption{Two-stage hardware architecture: a fully pipelined clustering engine (outside the light-yellow region) feeds the post-clustering modules (inside the light-yellow region), comprising $K$ parallel Ensemble Forest Exploration (EFE) instances and a Voting module that aggregates the $K$ candidate corrections into the final logical-error estimate.}
    \label{fig:hw:hwoverview}
    \Description{}
\end{figure*}

\subsection{Hardware Overview}

\figref{fig:hw:hwoverview} shows a two-stage architecture aligned with the algorithm: a fully pipelined clustering engine feeds $K$ parallel \emph{Ensemble Forest Exploration (EFE)} instances and a Voting module. Clustering is a streaming growth-and-merge computation that naturally maps to a seven-stage pipeline (S1--S7). Each EFE instance then performs a stateful forest traversal followed by ROE; because this traversal state cannot be time-multiplexed without overwriting in-flight adjacency data, the $K$ instances are replicated and run in parallel. Adjacency-list construction overlaps with clustering, and after clustering terminates each EFE instance traverses under a distinct priority scheme before the Voting module aggregates the predicted logical outcomes.

The main architectural bottleneck lies in the clustering pipeline: as profiled in \figref{fig:motivation:stall}, pipeline stalls dominate decoding latency. Clustering has low arithmetic intensity and is driven by indirect metadata accesses, so concurrent updates to shared global data can trigger severe bank conflicts. The remainder of this section therefore focuses on the conflict-aware multi-bank hashed memory system and hierarchical ID mapping scheme that keep this pipeline busy.

\subsection{Fully-Pipelined Architecture for Clustering Stage}
\label{sec:hw:fp}

These stalls primarily stem from RAW (Read-After-Write) hazards during sequential growth and memory bank conflicts under concurrent updates, motivating our specialized 7-stage pipeline in \figref{fig:hw:hwoverview} with targeted mechanisms to maximize hardware utilization.

The 7-stage pipeline processes one Vertex ID (VID) per cycle to achieve high-throughput decoding. To handle concurrent merge operations, we introduce a \textbf{hierarchical ID mapping} that interposes a Root-ID (RID) between VIDs and Cluster-IDs (CIDs). This indirection layer decouples physical vertex storage from logical cluster states. The dataflow operates as follows: \textbf{(S1--S3)} The pipeline dequeues a VID and concurrently fetches its associated RIDs and edge weights to resolve CIDs. \textbf{(S4)} Grow/merge logic is evaluated based on the retrieved metadata. \textbf{(S5--S7)} The pipeline manages active CIDs via a priority-based FIFO and updates boundary-vertex states. As detailed in \secref{sec:hw:remapping}, this RID-based hierarchy is key to collapsing write fan-out and mitigating peak memory bandwidth pressure.

To address the identified bottlenecks, we implement two core optimizations: (1) a \textbf{forwarding/bypass network} that feeds S4's growth decisions back to earlier stages to resolve data dependencies, and (2) a \textbf{hash-based memory layout} that minimizes contention during concurrent metadata lookups.

\subsection{Multi-Bank Memory Layout and Hashing Scheme}
\label{sec:hw:mb}

To handle highly concurrent local memory accesses during cluster growth, we design a customized multi-bank memory with conflict-free hashing. The hashing maps vertex and edge memory accesses to different memory banks, supporting single-cycle access for distances up to 15\footnote{Longer distances incur slightly higher clock cycle overhead.}.


Our memory system must satisfy a critical requirement: for any input lattice coordinate $(x,y,z)$, we need to simultaneously access (i) the center vertex and its axis-aligned neighbors, and (ii) all incident edges. 
The core design principle is to ensure that these concurrent accesses land in distinct banks as shown in~\figref{fig:multibank_hashing}. 

\begin{figure}[h]
  \centering
  \includegraphics[width=240pt]{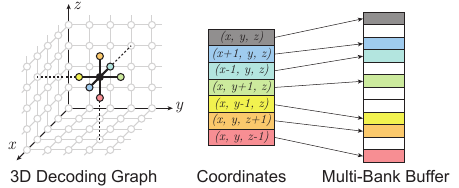}
  \caption{Multi-bank hashing distributes 7-vertex neighborhood to distinct banks.}
  \label{fig:multibank_hashing}
\end{figure}

We represent the 3D lattice using integer coordinates $(x,y,z)$
For edges, we use a consistent convention: each edge is identified by the coordinate of its \emph{positive (forward) endpoint}. Specifically:
\begin{equation}
\pm\mathbf{e}_i: (x,y,z) \leftrightarrow (x,y,z) \pm \mathbf{e}_i, \quad i \in \{x,y,z\}
\end{equation}
are all represented by $(x,y,z)$.
Our implementation supports lattices with code distance up to 15, with smaller configurations also supported.

To distribute vertex data uniformly across banks, we employ a linear congruential hash function that maps each vertex coordinate to a bank index $b_v$:
\begin{equation}
\begin{aligned}
b_v(x,y,z) &= (\alpha x + \beta y + \gamma z) \bmod M, \\
\end{aligned}
\end{equation}
where $\alpha=1,\beta=3,\gamma=5$, and $M=22$. 
A key property of these coefficients is that the resulting bank indices for the center $(x,y,z)$ and its axis-aligned neighbors 
are \emph{pairwise distinct by construction}, guaranteeing conflict-free concurrent accesses.

To support this bank distribution, vertices belonging to the same bank are densely packed using a lexicographic traversal order $(i,j,k)$ with $i$ outermost, then $j$, then $k$, where $i=0,\ldots,L-1$, $j=0,\ldots,L-1$, $k=0,\ldots,R-1$. The bank-internal address $a_v$ of a vertex is the \emph{rank} of $(x,y,z)$ among all triples that hash to the same bank:
\begin{equation}
\label{eq:av-rank}
\begin{aligned}
a_v(x,y,z)
=\sum_{\substack{0\le i, j<L\\0\le k<R}}
&[[(i+3j+5k)\bmod 22 = b_v(x,y,z)]]
\cdot \\
&[[(i,j,k)\prec_{\text{lex}} (x,y,z)]]
\end{aligned}
\end{equation}
where $[[\cdot]]$ is the Iverson bracket and $(i,j,k)\prec_{\text{lex}}(x,y,z)$ denotes lexicographic precedence. 

\subsection{Hierarchical ID Mapping for Cluster Merging}
\label{sec:hw:remapping}

Another source of memory conflicts happens during cluster merging. It features higher concurrency and weaker spatial locality. Consider the conventional merging process on a 3D grid with \(O(N^{3})\) points, where each coordinate stores a CID. During cluster growth, only a small, spatially contiguous set of points changes its CID. Our existing multi-bank buffer with hash-based placement handles these localized updates efficiently. In contrast, merges between concurrently growing clusters induce many logically simultaneous CID updates that may be scattered across the volume. Directly rewriting the ``VID\(\rightarrow\)CID'' store for tens of randomly located VIDs per merge scales poorly in hardware, amplifying both bank conflicts and write bandwidth demands.
\figref{fig:al:remapping} (a) presents a straightforward way of changing storage mapping relationships. Each thin line represents a storage mapping relationship. Under the straightforward approach, all storage cells whose CID was originally 3, 6, or 7 must be remapped to 1 during cluster merging, resulting in a total of 15 storage cells that need to be updated.

We introduce an intermediate representation that decouples high-fan-out, poorly localized merge updates from the coordinate address space. VIDs are first mapped to RIDs in the multi-bank, hash-partitioned buffer (optimized for growth). 
A compact memory then holds an ``RID\(\rightarrow\)CID'' indirection, where CID is the post-merge cluster identifier. Merge operations update only this ``RID\(\rightarrow\)CID'' mapping: all elements formerly addressed by the merged RIDs are logically relabeled by modifying a small number of RID entries rather than rewriting the many coordinates that reference them.
Because ``VID\(\rightarrow\)RID'' is already a many-to-one mapping during growth, the merge stage's write fan-out collapses from ``number of touched VIDs'' to ``number of touched RIDs,'' enabling single-cycle remaps in the specialized memory and sharply reducing peak concurrent traffic.
In \figref{fig:al:remapping} (b), we only update the memory mapping from RID to CID. In this example, the memory cells storing mapping relationships of RID 3, 6, and 7 are updated. Compared with the straightforward method, the concurrent memory access pressure has been relieved.

\begin{figure}
    \centering
    \includegraphics[width=240pt]{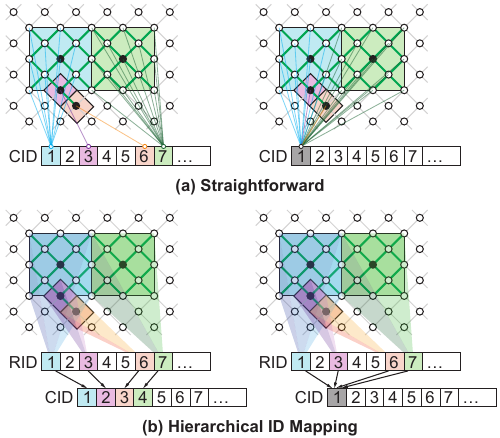} 
    \caption{Comparison of memory-cell update counts between the straightforward method and our method (left: before merging, right: after merging).}
    \label{fig:al:remapping}
    \Description{}
\end{figure}

\section{Experimental Methodology}

\subsection{Experimental Setup}\label{sec:exp:setup}

We evaluate the performance of our decoder from a comprehensive perspective, including accuracy, latency, and hardware efficiency.
The proposed hardware design is implemented in SystemVerilog HDL on a Xilinx Virtex UltraScale+ VU19P FPGA. The hardware resources and frequency are reported from Vivado~2024.2.
The algorithm performance evaluation is conducted through a Python-based hardware simulator, which is cross-validated against our hardware design.
It reports logical error rates and cycle counts, and tracks memory-access conflicts under our multi-bank memory layout and hashing scheme. 

Our experiments adopt several widely-used noise models to illustrate the generality of our decoder.
(1) Circuit-level depolarizing noise model implemented using the Stim library~\cite{gidney2021stim}. 
For a given code with distance $d$ and a specified number of syndrome extraction rounds, we generate noisy circuits in which depolarizing noise with rate $p$ is applied to data qubits after Clifford operations and between successive rounds of the circuit. Measurement errors are modeled as classical bit flips on the measurement outcomes with the same probability ($p$), while qubit reset operations are assumed to be ideal. Unless otherwise specified, we use $q=p$ and set the number of repeated syndrome rounds to $T=d$.
(2) Biased and unbiased Phenomenological noise model. For biased phenomenological noise, X- and Z-type data faults are injected with probabilities $p_X$ and $p_Z$, respectively, with bias ratio $\eta = p_Z/p_X$; measurement faults follow the same phenomenological model as above.

All algorithmic accuracy results in this paper are obtained on a surface code with periodic boundary conditions, the same setting used by QUEKUF~\cite{valentino2025quekuf}. For Micro-Blossom~\cite{wu2025micro} and Helios~\cite{liyanage2023scalable}, hardware-resource numbers are taken from their original publications on the rotated variant, while decoding latencies are reproduced by running their source code under matched noise conditions. Surface-code variants share the same threshold and differ only in boundary conditions~\cite{wang2009threshold}. Reproducing these baselines on a periodic-boundary surface code would increase their decoding latency, since the corresponding syndrome graph is larger; the reported values therefore provide a best-case estimate of these baselines.

\begin{figure*}[!t]
    \centering
    \includegraphics[width=1.0\textwidth]{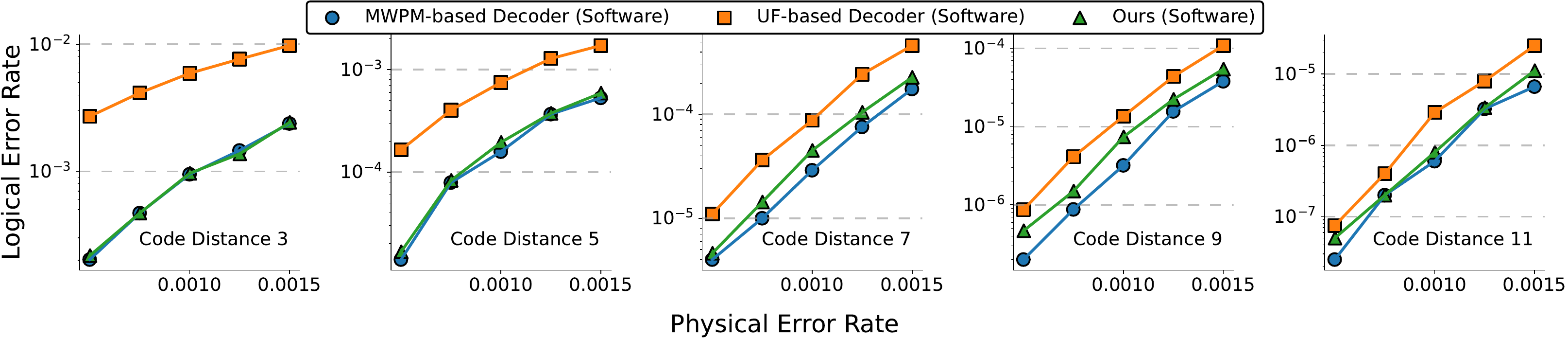}
    \caption{Logical error rate comparison among MWPM-based decoders, UF-based decoders, and our decoder.}
    \label{fig:LER}
    \Description{}
\end{figure*}

\subsection{Decoder Performance Metric}\label{sec:decoder metric}

\subsubsection{Real-time Compliance}

Modern quantum--classical systems impose tight decoding-latency constraints to prevent backlog, which would otherwise compromise logical fidelity and stall program execution.
Prior architecture works for superconducting platforms commonly target sub-microsecond decoding~\cite{holmes2020nisq,das2022afs,wu2025micro}.
Following these works, the real-time compliance of hardware decoders is set to the time of one syndrome extraction round.

\subsubsection{System Infidelity}
Decoding in fault-tolerant quantum computing (FTQC) can be broadly categorized into two types:

\begin{enumerate}
    \item \textbf{Pauli-frame decoding}: the decoding outcome is used solely to correct the measurement result of the corresponding logical qubit through Pauli frame updates. This is typical in memory experiments.
    \item \textbf{Feedback decoding}: the decoding result not only corrects the measurement of a logical qubit but also serves as feedback to conditionally apply logical operations on other qubits, common in implementing non-Clifford operations.
\end{enumerate}

To evaluate Pauli-frame decoding, metrics such as the \textit{logical error rate (LER)} and \textit{reaction time (latency)}~\cite{surfacecodebelowthreshold2024} are generally sufficient. However, in \textit{Feedback decoding}, the combination of decoding latency and accuracy becomes critical. For instance, suppose a logical operation on logical qubit $A$ is conditioned on the outcome of a \( Z \)-basis measurement on logical qubit $B$. If the decoding of $B$ takes \( R \) rounds of syndrome measurements (measured in cycles), then the physical qubits of $A$ must remain idle during this time. As a result, \( R \) additional rounds of memory decoding must be applied to $A$ before the conditional operation can proceed. However, this waiting increases the total logical error rate since more errors would be accumulated on physical qubits as discussed in~\cite{bausch2023learning}. Therefore, a fairer and more appropriate metric is required to evaluate the impact of decoding latency on decoding accuracy in feedback-based logical operations.
We defined this metric to quantify how the decoding latency of logical patch B affects the decoding fidelity of logical patch A, specifically when A’s operation is conditioned on B’s mid-circuit measurement result.

In Ref.~\cite{bausch2023learning}, the decoder error rate $E(n)$ after $n$ rounds of syndrome measurements assuming a per-round logical error rate $\epsilon$ is given by an empirical formula $E(n) = \frac{1}{2}(1 - (1 - 2\epsilon)^n)$. However, $\epsilon$ is not directly measurable in FTQC, where the fundamental unit is a full QEC cycle of $d$ syndrome rounds. We therefore reparametrize $E(n)$ in terms of the decoder's LER over $d$ rounds, $E(d)$.
Using \((1-2\epsilon)^d=1-2E(d)\) and \((1-2\epsilon)^n=((1-2\epsilon)^d)^m\) with $m=n/d$ the number of decoding cycles, we obtain the \textit{effective decoder error rate}
\begin{equation}
    \hat{E}(m) = \tfrac12\bigl(1 - (1-2E(d))^m\bigr)
\end{equation}
and the corresponding \textit{effective decoder fidelity}
\begin{equation}
    \hat{F}(m) = 1 - 2\hat{E}(m) = (1-2E(d))^m.
\end{equation}

In a feedback-decoding scenario, if the decoding latency for qubit $B$ is $R$ (in units of syndrome cycles) and qubit $A$ has been idle for $m$ decoding cycles, the effective fidelity of $A$ under $B$'s latency becomes
\begin{equation}
    \hat{F}\left(m + \tfrac{R}{d}\right) = (1-2E(d))^{R/d}\cdot\hat{F}(m),
\end{equation}
where $R$ is computed from $B$'s decoding latency and $E(d)$ is $A$'s decoding LER.
The impact of $B$'s latency is thus captured by the factor $(1-2E(d))^{R/d}$. If the decoding latency is shorter than one syndrome cycle, no backlog occurs and the LER is unaffected, so $R$ is floored at $1$. Inverting for convention (lower is better), the resulting \textit{Infidelity factor} is
\begin{equation}
    \hat{C}(R) = 1-(1-2E(d))^{\frac{\max(1,R)}{d}} \in [0, 1),
\end{equation}
with $R = L/l$ where $L$ is the decoding latency and $l$ the duration of one syndrome extraction round; the mask $\max(1,R)$ implies that if $B$'s latency is less than one extraction round, its impact on $A$'s fidelity is negligible and $\hat C(R)$ is dominated by $E(d)$. This threshold is sufficient for FTQC because as long as decoding completes before the next syndrome is extracted, latency does not degrade the LER. A lower $\hat C(R)$ indicates higher fidelity under latency constraints. While sensitive to both idle time and idle error rates, the impacts of physical idle errors, along with optimizations such as Dynamic Decoupling and Pauli Twirling, are fully encapsulated by $E(d)$.

\section{End-to-End Evaluation}\label{sec:exp}

We first evaluate algorithmic accuracy, latency, and system-level impact, and then analyze the hardware cost and scalability of the proposed design.

\subsection{Decoding Accuracy Evaluation}\label{sec:exp:eva:accuracy}

The decoding accuracy of the proposed method is evaluated under the circuit-level
noise model described in~\secref{sec:exp:setup}. To isolate the algorithmic gain
introduced by the ensemble-forest method in~\secref{sec:meth:forest}, we first
compare against two widely used surface-code decoders, MWPM and UF, at the algorithm level as shown
in~\figref{fig:LER}. For the accuracy estimate of MWPM-based decoders, we use the
PyMatching implementation~\cite{higgott2021pymatchingpythonpackagedecoding}. For the UF-based decoders Helios
and QUEKUF, we evaluate accuracy using our own baseline UF software
implementation to avoid conflating decoder quality with minor differences in
boundary-condition handling across implementations.

In these experiments, the candidate number $K$ in our design
is fixed to 24.
When the code distance is small, the accuracy of our coset ensemble decoding is close to that of MWPM.
For larger code distances, the increased graph size causes this fixed $K$ to limit further accuracy improvements.

\begin{figure}[!tbp]
    \centering
    \includegraphics[width=0.85\linewidth]{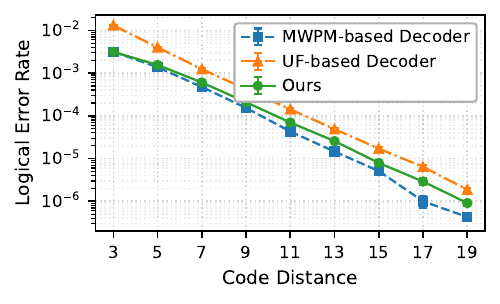}
    \caption{Logical error rate of MWPM, UF, and our decoder ($K{=}24$) vs.\ code distance at $p{=}0.002$ circuit-level noise.}
    \label{fig:LER_larger_distance}
    \Description{}
\end{figure}

\begin{figure*}[!t]
    \centering
    \includegraphics[width=1.0\textwidth]{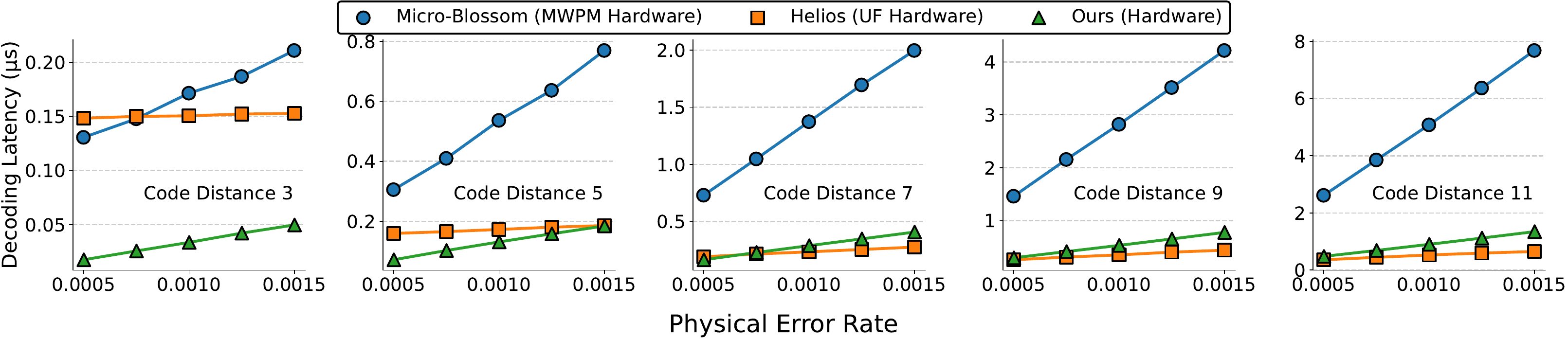}
    \caption{Decoding latency per decoding task ($d$ syndrome
    rounds), compared with state-of-the-art decoders.}
    \label{fig:latency}
    \Description{}
\end{figure*}

\begin{figure}[!tbp]
    \centering
    \includegraphics[width=1.0\linewidth]{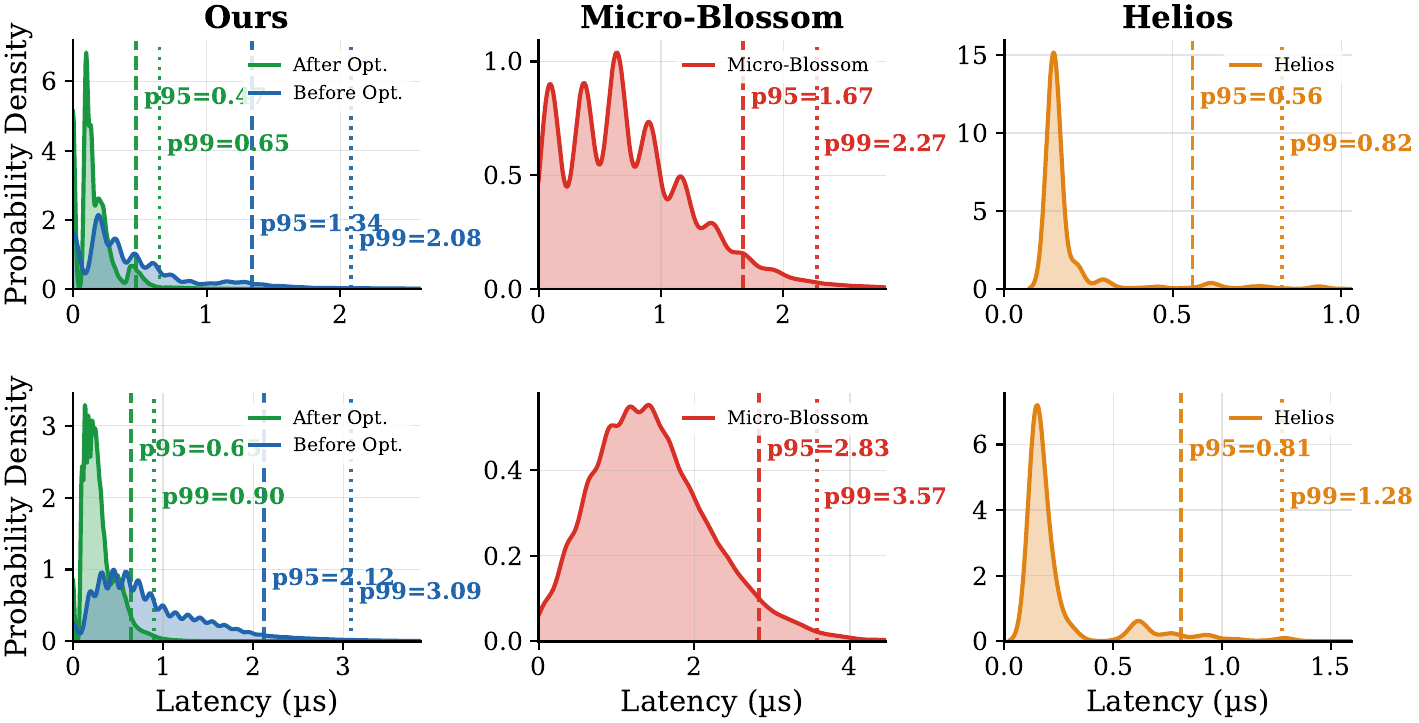}
    \caption{Decoding latency distribution per $d$-round task at
    $p{=}0.0005$, for code distance 7 (top) and 9 (bottom).}
    \label{fig:latencydistribnution}
    \Description{}
\end{figure}

\begin{figure*}[!t]
    \centering
    \includegraphics[width=1.0\textwidth]{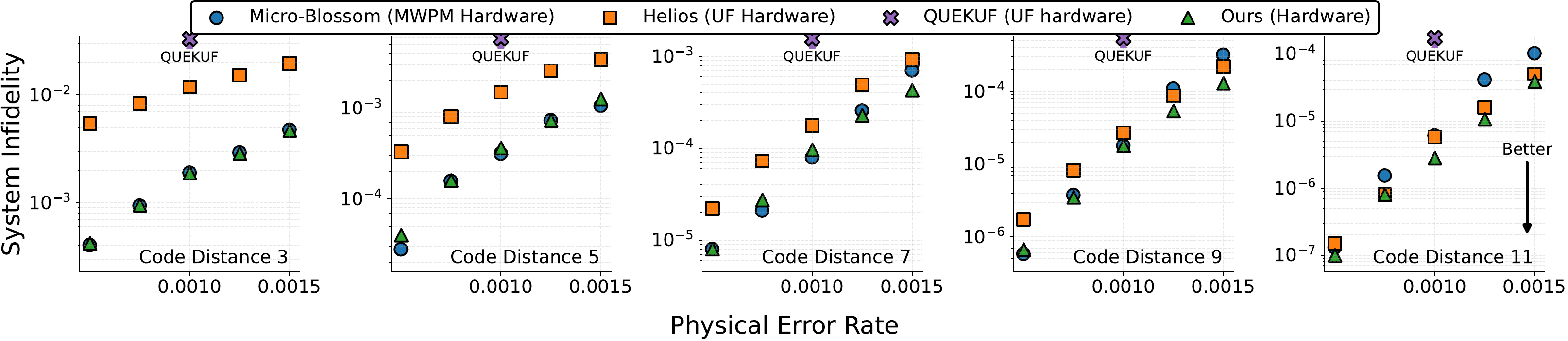}
    \caption{System infidelity comparison with SOTA decoders.}
    \label{fig:systemfidelity}
    \Description{}
\end{figure*}

\begin{figure}[!tbp]
    \centering
    \includegraphics[width=0.9\linewidth]{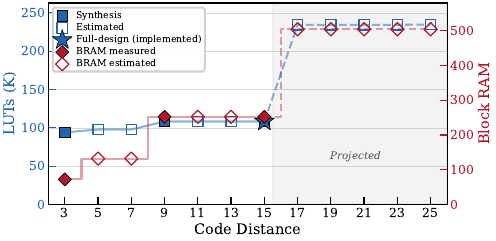}
    \caption{FPGA resource usage vs.\ code distance $d$. Filled markers denote full Vivado synthesis ($d{=}3,9,15$); open markers are estimates. The shaded region indicates extrapolation beyond measured data.}
    \label{fig:resource_scaling}
\end{figure}

\begin{figure*}[!t]
    \centering
    \includegraphics[width=1.0\textwidth]{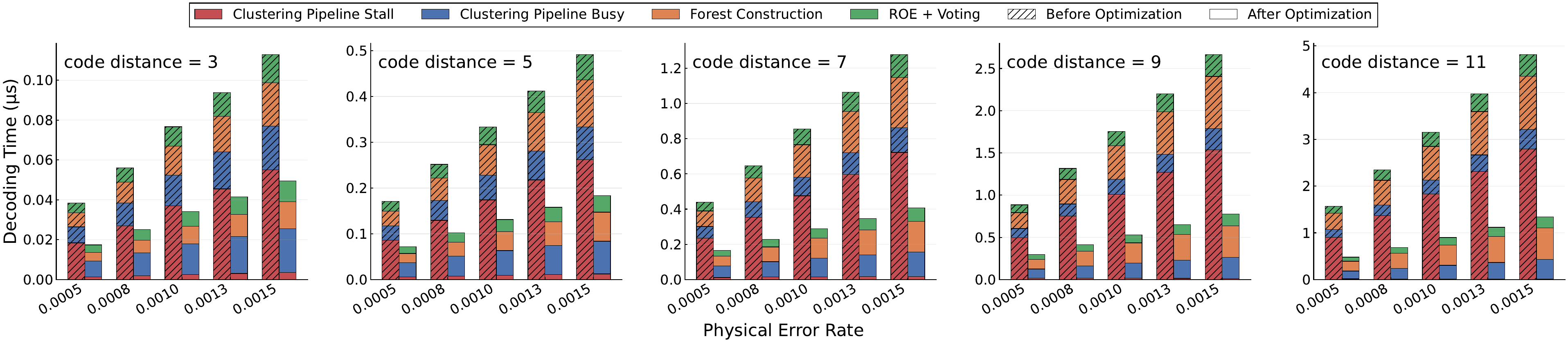}
    \caption{Hardware latency breakdown before and after optimization.}
    \label{fig:hwperfAnalysis}
    \Description{}
\end{figure*}

To examine how the accuracy advantage persists at larger
code distances, \figref{fig:LER_larger_distance} reports the logical error rate
of MWPM, UF, and our decoder at a fixed physical error rate
$p{=}0.002$ circuit-level noise for $d\in\{3,5,\dots,19\}$. Our
decoder tracks MWPM closely across this range, substantially
outperforming UF: the LER ratio to MWPM grows from $1.0\times$ at $d{=}3$ to ${\sim}2.1\times$ at $d{=}19$ at fixed $K{=}24$; the residual gap can be further reduced by increasing $K$.

\subsection{Decoding Latency Evaluation}

\figref{fig:latency} compares the average decoding latency per $d$-round task of our decoder against Micro-Blossom and Helios for $d=3$ to $d=11$.
The MWPM-based Micro-Blossom grows steeply with $d$, reflecting the high complexity of minimum-weight matching, while our decoder and Helios both remain sub-microsecond across all evaluated distances.

Helios's distributed per-vertex PE array yields a latency that scales sublinearly with $d$ through per-iteration coordination and convergecast, whereas our pipelined design scales with the active-vertex count while keeping the hardware footprint compact (the 24 ensemble candidates are processed in parallel and do not affect the critical path). At small $d$, where active vertices are few, our pipeline retires the task well below Helios's per-iteration floor, producing the $3$--$5\times$ advantage at $d{=}3$. As $d$ grows, the two curves converge near $d{=}7$--$9$, with our design still ahead in lower-$p$ settings at $d{=}7$; beyond this range, Helios's sublinear scaling becomes more favorable on pure latency. Our design targets a resource-efficient latency/area point, using roughly $6\times$ fewer LUTs and $3\times$ fewer FFs than Helios at $d{=}15$ (\tabref{tab:hardwareComp}).
Throughput follows the same pattern, ranging from $1.88$\,M decodes/sec at $d{=}9$ to $29.8$\,M at $d{=}3$ (both at $p{=}0.001$), which is $4$--$5\times$ Micro-Blossom and comparable to Helios.

\figref{fig:latencydistribnution} presents the latency probability density per full $d$-round decoding task. Our optimizations reduce not only average latency but also tail latency: at $d{=}9$, p95 drops from 2.12\,\textmu s to 0.65\,\textmu s ($-$69\%) and p99 from 3.09\,\textmu s to 0.90\,\textmu s ($-$71\%), with comparable improvements at $d{=}7$.
 The narrowed density curves illustrate that the optimizations reduce variance rather than merely shifting the mean. Compared with the Micro-Blossom, our optimized p99 is 3.5--4$\times$ lower (0.65 vs.\ 2.27\,\textmu s at code distance 7; 0.90 vs.\ 3.57\,\textmu s at code distance 9) and slightly below Helios (0.65 vs.\ 0.82\,\textmu s; 0.90 vs.\ 1.28\,\textmu s). 
 These results show that the proposed decoder achieves competitive or superior tail-latency behavior relative to state-of-the-art MWPM designs.

\subsection{System Infidelity Comparison}

\figref{fig:systemfidelity} compares our decoder with three baseline designs in terms of system infidelity, a more appropriate metric introduced in~\secref{sec:decoder metric}.
For small code distances \(d=3\), our decoder achieves essentially the same system infidelity as Micro-Blossom, and significantly outperforms both UF-based decoders. Although MWPM yields intrinsically lower LER than UF variants, Micro-Blossom's decoding latency exceeds the threshold at $d\!\ge\!5$ (\figref{fig:latency}), so $\hat C(R){>}0$ penalizes its system infidelity; it eventually becomes worse than that of Helios.
In contrast, our decoder maintains low infidelity thanks to its better scalability, achieving higher accuracy than the UF-based decoders while maintaining low latency.
At $d=11$, our decoder reduces the system infidelity by up to 74.3\% compared to Micro-Blossom and by 51.7\% compared to Helios.

\section{Hardware Performance Analyses}

\subsection{Matched/Normalized Hardware Resource Comparison}

Table~\ref{tab:hardwareComp} summarizes the hardware cost of our design and
three representative baseline decoders: Micro-Blossom~\cite{wu2025micro}, Helios~\cite{liyanage2023scalable}, and QUEKUF~\cite{valentino2025quekuf}.
To make a fair comparison, we further normalize prior results to a common code distance using the reported resources together with the scaling complexity described in the original papers. Specifically, we estimate the resource cost at the target distance by fitting/interpolating from the reported results under the stated scaling trend.
For latency comparisons, the clock frequency of each baseline is taken as published from its single design point, and is applied uniformly across all evaluated code distances; this matches our own setup, in which a single RTL design is used across all distances.
In terms of logic utilization, our architecture
requires only 108k LUTs, which is about $8.0\times$ fewer than
Micro-Blossom, and roughly $4.3\times$ fewer than
QUEKUF. A similar trend holds for flip-flops (FFs): our design uses 43k FFs, i.e.,
$2.9\times$ fewer than Helios and $14.7\times$ fewer than QUEKUF. 
Regarding on-chip memory, our design uses about half the BRAM of QUEKUF while supporting nearly twice the maximum code distance, which
demonstrates a better efficiency. In terms of achievable
frequency, our decoder runs at 163\,MHz, which is $3.7\times$ higher than Micro-Blossom and $2.1\times$ higher than Helios. 

We also quantify the dynamic-power overhead of ensemble parallelism using the Vivado power report. Each EFE branch contributes about 50\,mW of dynamic power, so the $K{=}24$ parallel branches together account for approximately 1.2\,W of dynamic power. 
Increasing $K$ therefore scales only the branch term linearly while leaving the shared clustering engine and voting part unchanged.

\begin{table}[htbp]
\centering
\caption{Hardware resource comparison. For fairer comparison across different code distances, Helios and QUEKUF are shown with both the originally reported values and normalized estimates at code distance 15.}
\label{tab:hardwareComp}
\small
\setlength{\tabcolsep}{2.8pt}
\renewcommand{\arraystretch}{1.1}
\begin{tabular}{lcccccc}
\toprule
& Micro-B. & \multicolumn{2}{c}{Helios} & \multicolumn{2}{c}{QUEKUF} & \textbf{Ours} \\
\cmidrule(lr){3-4} \cmidrule(lr){5-6}
&  & Orig. & Norm. & Orig. & Norm. &  \\
\midrule
LUT (k)    & 867   & 889   & 614   & 309   & 463  & 108 \\
FF (k)     & NA    & 177   & 126   & 453   & 634   & 43 \\
BRAM tiles     & 3    & NA    & NA    & 548   & 828   & 252 \\
Freq (MHz) & 43    & 75    & NA  & 238   & NA & 163 \\
Code distance    & 15    & 17    & 15    & 8     & 15          & 15 \\
\bottomrule
\end{tabular}

\vspace{2pt}
\end{table}

\subsection{Hardware Scalability with Code Distance}

Existing MWPM- and UF-based
hardware decoders typically exhibit rapidly growing resource consumption and
degraded clock frequency as the code distance increases. In contrast, our
architecture is explicitly designed to scale more efficiently with distance,
providing a more hardware-efficient solution toward larger-scale surface-code decoding.

\figref{fig:resource_scaling} estimates FPGA resource costs as $d$ scales from 3 to 25.
Across this sweep we scale only the components addressed by lattice coordinates---primarily the multi-bank vertex/edge buffer of the clustering engine and the per-EFE adjacency storage---while holding the rest of the design at its measured $d{=}15$ sizing.
The scaled buffers grow analytically as $O(2^{\lceil\log_2 d\rceil})$ due to power-of-two address quantization. 

\subsection{Latency Breakdown and Residual Stalls}

\figref{fig:hwperfAnalysis} breaks down the decoding latency under the same setting of \figref{fig:latency} before and after our optimizations. The clustering stage is pipelined so that its execution overlaps with the subsequent forest-construction and peeling stages; when the pipeline cannot be fed in time, the resulting stalls appear as idle cycles. In the baseline design, stalls in the clustering pipeline dominate execution time, accounting for 48\%--58\% of total latency across tested configurations. After applying the proposed optimizations, the clustering pipeline stalls fraction drops to 1--7\%, effectively eliminating pipeline bubbles. This translates to an overall speedup of 2.2--3.6$\times$ over the baseline, with larger gains at higher code distances (e.g., 3.0--3.4$\times$ at $d{=}9$ and 3.2--3.6$\times$ at $d{=}11$).
The residual stalls in the clustering pipeline originate from a parity-update RAW hazard in the clustering tail: once only a few active clusters remain in flight, a newly dispatched vertex cannot resolve its merge decision until the preceding growth step commits its parity update, briefly draining the pipeline between dispatches. Eliminating this 1--7\% residual would require either speculative parity evaluation or out-of-order dispatch, both of which incur non-negligible control-logic and memory-bandwidth overhead without affecting the logical error rate. We therefore retain the simpler in-order pipeline, which offers a favorable design point given the small remaining headroom.

\begin{figure}[!tbp]
    \centering
    \includegraphics[width=1.0\linewidth]{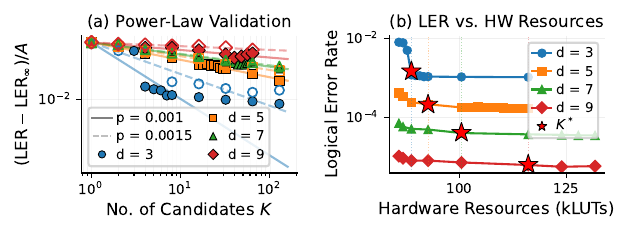}
    \caption{Tunability analysis of our proposal.
(a)~Power-law model validation at two noise rates ($p{=}0.001$, $0.0015$).
(b)~LER vs.\ hardware area.}
    \label{fig:tunability}
    \Description{}
\end{figure}

\section{Sensitivity and Robustness Analysis}

\subsection{Tunability Analysis}

We fit an empirical power-law model: $\mathrm{LER}(K) = \mathrm{LER}_\infty + A \cdot K^{-\alpha}$, with $\mathrm{LER}_\infty$ the error floor, $A$ the improvement headroom, and $\alpha$ the diminishing-returns exponent. \figref{fig:tunability}(a) shows the fit holds for each $(d, p)$, with $\alpha$ decreasing from $1.98$ ($d{=}3$) to $0.27$ ($d{=}9$). Defining $K^*$ as the smallest $K$ capturing $70\%$ of the LER improvement yields $K^* = 2^{\lfloor (d+1)/2 \rfloor}$. 
This remains accurate at $p{=}0.0015$ for $d \ge 7$, but underestimates $K^*$ for $d \in \{3, 5\}$, since larger $p$ requires more candidates per fractional gain. \figref{fig:tunability}(b) plots the accuracy--resource Pareto $\mathrm{LUT}_{\mathrm{total}} = \mathrm{LUT}_{\mathrm{fixed}} + K \cdot \mathrm{LUT}_{\mathrm{branch}}$; red stars mark $K^*$ near each knee.

\begin{figure}[!tbp]
    \centering
    \includegraphics[width=0.85\linewidth]{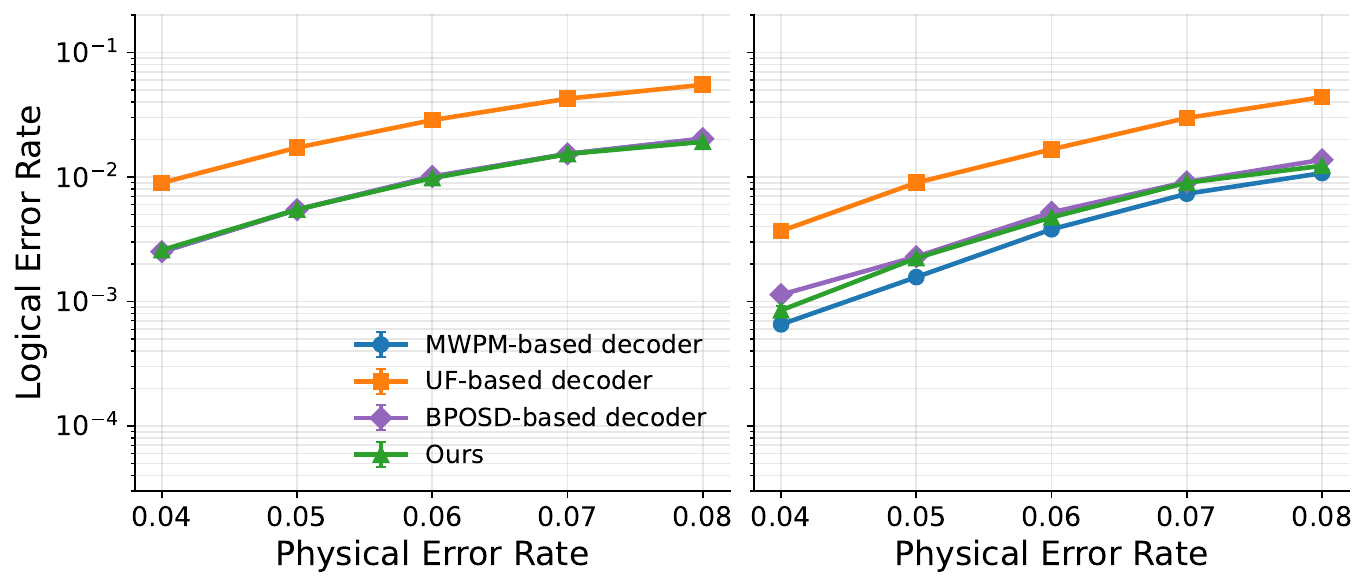}
    \caption{Logical error rate comparison on the repetition code under a phenomenological noise model.}
    \label{fig:LERrepetition}
    \Description{}
\end{figure} 

\begin{figure}[!tbp]
    \centering
    \includegraphics[width=0.95\linewidth]{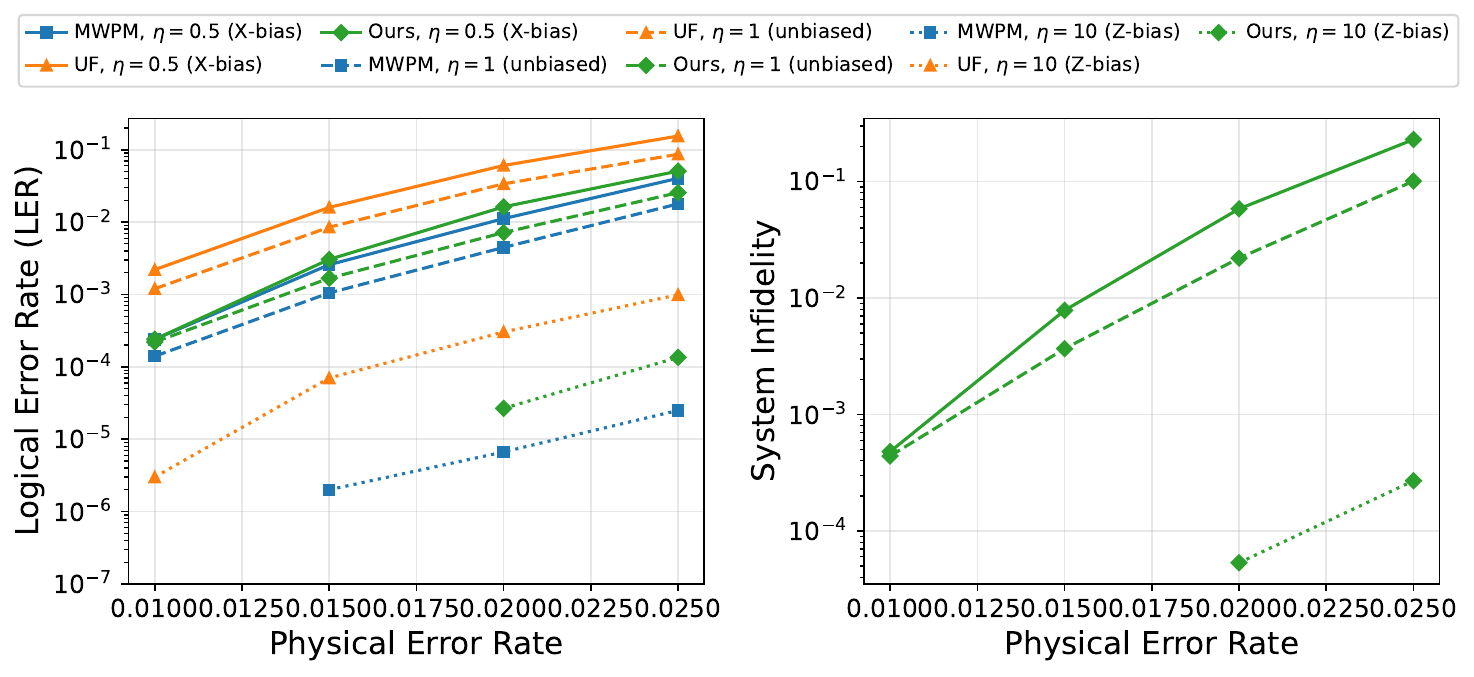}
    \caption{Logical error rate (left) and the corresponding system infidelity (right) under three biased phenomenological noise settings ($d{=}7$, X-channel).}
    \label{fig:biased_noise}
\end{figure}

\subsection{Compatibility and Comparison with BP+OSD}

To position our decoder on the accuracy spectrum between UF and near-optimal decoders, we include BP+OSD as an accuracy reference alongside MWPM and UF, using product-sum BP with OSD-CS of order 15 on the same Tanner graph for a fair comparison. The evaluation uses the repetition code under a phenomenological noise model, which additionally verifies the generality of our decoder across code families and noise models. We sweep code distances $d\in\{5,7\}$ and physical error rates $p\in[0.04,0.08]$.
\figref{fig:LERrepetition} reports the LER of MWPM, UF, BP+OSD, and our decoder, with the left and right panels corresponding to $d{=}5$ and $d{=}7$, respectively. Our decoder achieves LER within $1.0$--$1.4\times$ of MWPM, on par with BP+OSD ($1.0$--$1.7\times$), while UF trails by $2.7$--$5.7\times$. On these benchmarks, our decoder tracks MWPM and BP+OSD closely, decisively separating it from UF.

We also evaluate our decoder under biased phenomenological noise with three common bias ratios $\eta = p_Z/p_X$: $\eta{=}0.5$ (X-biased), $\eta{=}1$ (depolarizing), and $\eta{=}10$.
As shown in \figref{fig:biased_noise}, our decoder closes ${\sim}94\%$ of the UF-to-MWPM gap under X-biased noise, where vanilla UF incurs $6.2{\times}$ higher LER than MWPM.
The corresponding system-infidelity curves are shown in the right panel of \figref{fig:biased_noise}.

\begin{figure}[!tbp]
  \centering
  \includegraphics[width=\linewidth]{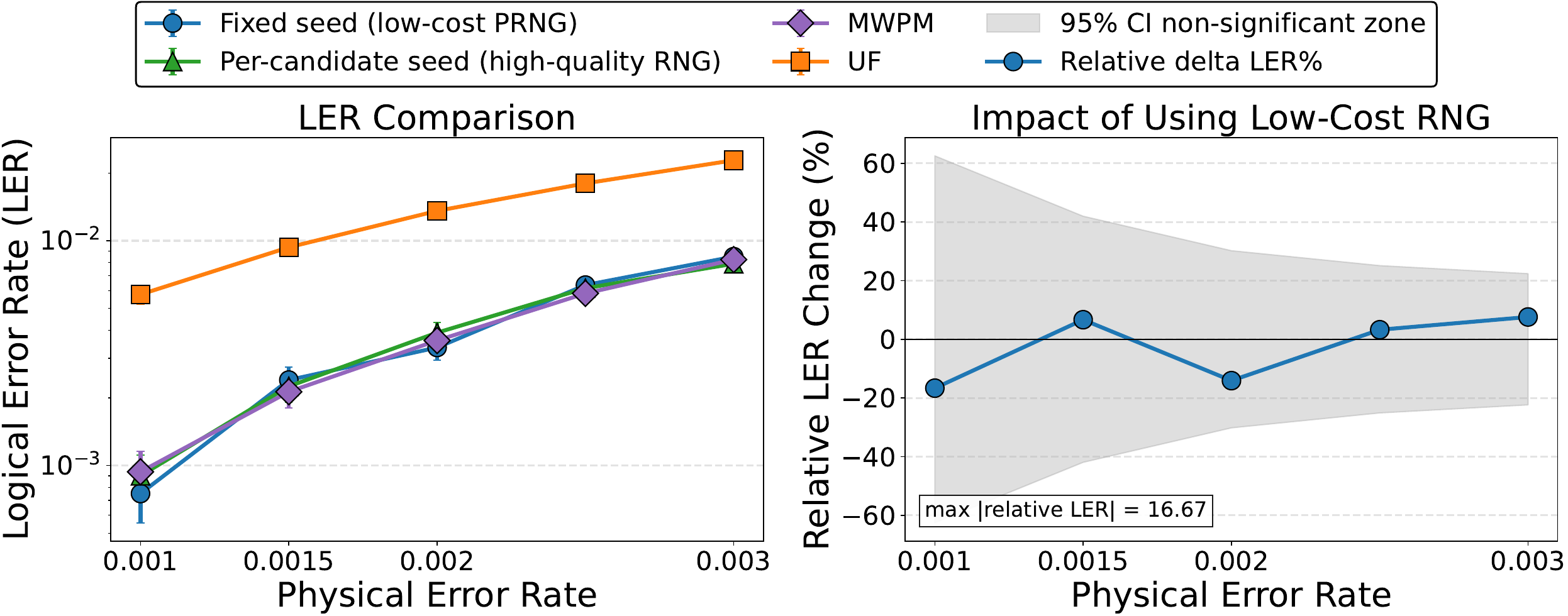}
  \caption{Decoding performance of different random policies.
  }
  \label{fig:fixed_rng_reference}
\end{figure}

\subsection{Robustness to Low-Cost Randomness}

A practical concern for FPGA deployment is whether decoder gains rely on high-quality, fully independent random streams, which are often expensive to implement in hardware.
To evaluate this risk in a conservative setting, we run all experiments with a \emph{fixed-seed} random policy during decoding.
In our implementation, each decoding shot uses a single stateful PRNG stream initialized from a fixed base seed.

We evaluate our decoder under different random policies.
\figref{fig:fixed_rng_reference} shows the LER comparison and quantifies the relative LER difference between different PRNGs.
The shaded region indicates a 95\% non-significant zone using a binomial approximation. 
Our work remains competitive against both MWPM and UF when using cheap but low-quality PRNGs.
The relative differences are small and mostly lie within the 95\% non-significant band, indicating no meaningful instability from the low-cost fixed-seed setting.

\begin{figure}[!tbp]
    \centering
    \includegraphics[width=0.95\linewidth]{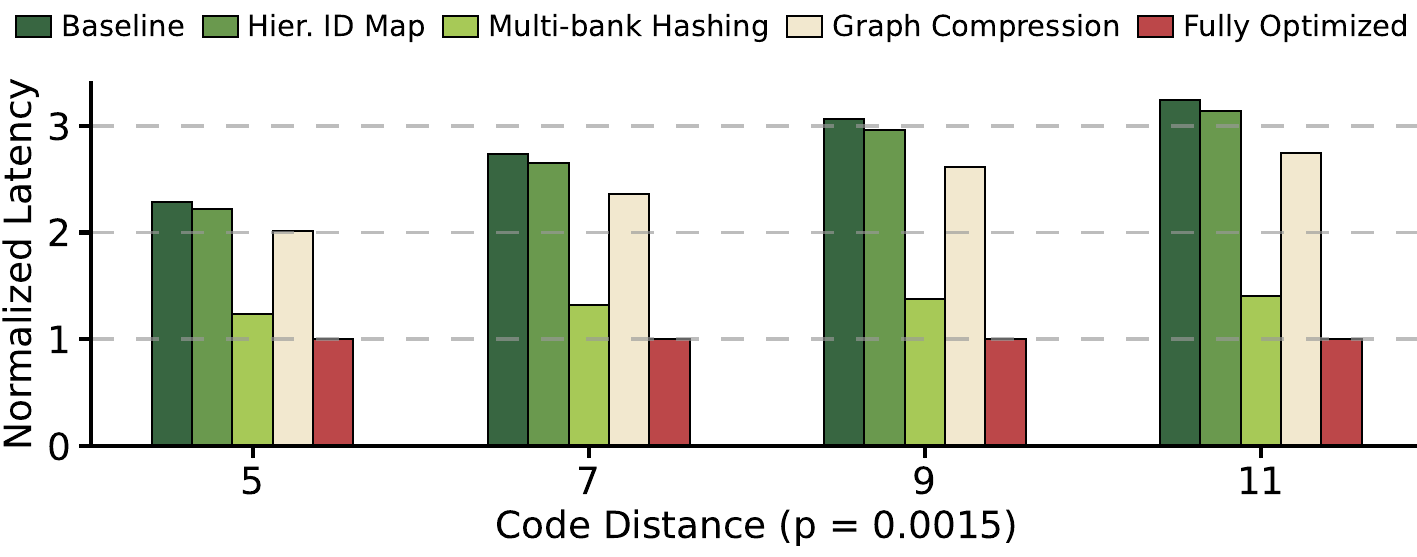}
    \caption{Decoding latency with and without the proposed optimizations.}
    \label{fig:ablation:latency}
\end{figure} 

\subsection{Optimization Ablation}

We perform an ablation study to quantify the impact of different optimizations. We use a hardware-oriented software simulator that mirrors the dataflow of our final microarchitecture and allows individual hardware features to be selectively enabled or disabled. 

We take the coset ensemble decoder architecture without any additional optimization as our baseline. 
Through ablation experiments, we aim to demonstrate that these optimizations can work synergistically to achieve an overall reduction in decoding latency. 
The trend in \figref{fig:ablation:latency} shows that the benefits of our optimizations increase with larger code distance \(d\). At \(d=11\) and \(p=0.0015\), Hierarchical ID Mapping delivers \(1.03\times\), Multi-bank Hashing delivers \(2.30\times\), and Graph Compression delivers \(1.18\times\) speedup over the baseline; enabling all optimizations achieves a \(3.24\times\) overall speedup.

\section{Related Work}\label{sec:relatedworks}

\subsection{QEC Algorithm}
While there exist families of quantum error correction codes and decoding algorithms~\cite{tillich2013quantum, leverrier2015quantum, fawzi2020constant, fawzi2018efficient, panteleev2021quantum, panteleev2021degenerate}, our work focuses on the surface code~\cite{kitaev2003fault,deMartiiOlius2024}. Two mainstream decoding methods are Minimum-Weight Perfect Matching (MWPM)~\cite{Higgott2025} and Union-Find (UF)~\cite{delfosse2021almost}. MWPM solves the physical ML error problem while UF is a faster, sub-optimal version. Our decoder, by accounting for degeneracy and logical cosets, achieves higher accuracy than UF-based decoders in the LER comparisons (\figref{fig:LER} and \figref{fig:LER_larger_distance}) while remaining in a similar low-latency regime (\figref{fig:latency}); compared with MWPM-based Micro-Blossom, it provides comparable accuracy at significantly lower latency. In contrast, the Tensor-Network (TN) decoder~\cite{deMartiiOlius2024}, which directly solves the logical coset ML problem, suffers from high contraction complexity. Although our decoder solves a sub-optimal coset ML problem, it maintains very low latency and high scalability for real-time implementation.

\subsection{QEC Hardware and Compilation}
There have been various hardware implementations of MWPM and UF decoders and their variants~\cite{das2022lilliput, vittal2023astrea, wu2023fusionblossomfastmwpm, wu2025micro, alavisamani2024promatch, liyanage2023scalable, Liyanage2024, valentino2025quekuf, liang2026hardware}. 
In particular, ~\cite{Liyanage2024,valentino2025quekuf} optimize the Union-Find clustering phase by mapping many vertices or clusters to distributed processing elements and exploiting extensive spatial parallelism, achieving low latency at the cost of considerable hardware resources.
By contrast, our design uses a single deeply pipelined clustering data path and explicitly reduces pipeline stalls via forwarding and conflict-aware memory organization, attaining low latency while remaining significantly more resource-efficient.
In parallel, recent efforts have sought to automate and accelerate the generation of detector error models for QEC protocols involving logical operations~\cite{fang2026lightstim,ziad2026greenpeas}.

\section{Conclusion}
This work presents a novel algorithm--hardware co-design for quantum error correction (QEC) decoding.
At the algorithmic level, we propose coset ensemble decoding, together with reverse-order elimination and lossless graph compression, to approximate coset-level maximum-likelihood decoding at practical cost.
At the hardware level, we introduce a customized architecture with multi-bank memory hashing and hierarchical ID mapping, achieving high hardware efficiency and scalability.
Overall, our co-design achieves a better accuracy--latency trade-off than prior state-of-the-art decoders, while maintaining high resource efficiency for scalable QEC decoding.
Future work includes assessing the practicality of this approach on available quantum devices, evaluating its performance across broader code families and noise settings, and automating the proposed optimizations across a broader range of FPGA implementations.

\section*{Acknowledgment}
The support of the UK EPSRC (Grant EP/W03221X/1, EP/V028251/1, EP/S030069/1, EP/X036006/1), UKRI (Grant 256), Altera and AMD is gratefully acknowledged.

\clearpage

\bibliographystyle{IEEEtran}
\bibliography{refs}

@article{gidney2021stim,
  title={Stim: a fast stabilizer circuit simulator},
  author={Gidney, Craig},
  journal={Quantum},
  volume={5},
  pages={497},
  year={2021},
  publisher={Verein zur F{\"o}rderung des Open Access Publizierens in den Quantenwissenschaften}
}

@article{bausch2023learning,
  title={Learning to decode the surface code with a recurrent, transformer-based neural network},
  author={Bausch, Johannes and Senior, Andrew W and Heras, Francisco J H and Edlich, Thomas and Davies, Alex and Newman, Michael and Jones, Cody and Satzinger, Kevin and Niu, Murphy Yuezhen and Blackwell, Sam and Holland, George and Kafri, Dvir and Atalaya, Juan and Gidney, Craig and Hassabis, Demis and Boixo, Sergio and Neven, Hartmut and Kohli, Pushmeet},
  journal={arXiv preprint arXiv:2310.05900},
  year={2023}
}

@article{Kolmogorov2009BlossomVA,
  title={Blossom V: a new implementation of a minimum cost perfect matching algorithm},
  author={Vladimir Kolmogorov},
  journal={Mathematical Programming Computation},
  year={2009},
  volume={1},
  pages={43-67},
  url={https://api.semanticscholar.org/CorpusID:17864814}
}

@article{wang2009threshold,
  title={Threshold error rates for the toric and surface codes},
  author={Wang, David S and Fowler, Austin G and Stephens, Ashley M and Hollenberg, Lloyd CL},
  journal={arXiv preprint arXiv:0905.0531},
  year={2009}
}

@inproceedings{das2022afs,
  title={Afs: Accurate, fast, and scalable error-decoding for fault-tolerant quantum computers},
  author={Das, Poulami and Pattison, Christopher A and Manne, Srilatha and Carmean, Douglas M and Svore, Krysta M and Qureshi, Moinuddin and Delfosse, Nicolas},
  booktitle={2022 IEEE International Symposium on High-Performance Computer Architecture (HPCA)},
  pages={259--273},
  year={2022},
  organization={IEEE}
}

@inproceedings{holmes2020nisq,
  title={NISQ+: Boosting quantum computing power by approximating quantum error correction},
  author={Holmes, Adam and Jokar, Mohammad Reza and Pasandi, Ghasem and Ding, Yongshan and Pedram, Massoud and Chong, Frederic T},
  booktitle={2020 ACM/IEEE 47th annual international symposium on computer architecture (ISCA)},
  pages={556--569},
  year={2020},
  organization={IEEE}
}

@inproceedings{wu2025micro,
  title={Micro blossom: Accelerated minimum-weight perfect matching decoding for quantum error correction},
  author={Wu, Yue and Liyanage, Namitha and Zhong, Lin},
  booktitle={Proceedings of the 30th ACM International Conference on Architectural Support for Programming Languages and Operating Systems, Volume 2},
  pages={639--654},
  year={2025}
}

@article{kjaergaard2020superconducting,
  title={Superconducting Qubits: Current State of Play},
  author={Kjaergaard, Morten and Schwartz, Maximilian E and Braum{\"u}ller, Jochen and Krantz, Philip and Wang, Joel I-J and Gustavsson, Simon and Oliver, William D},
  journal={Annual Review of Condensed Matter Physics},
  volume={11},
  pages={369--395},
  year={2020},
  publisher={Annual Reviews}
}

@article{terhal2015quantum,
  title={Quantum error correction for quantum memories},
  author={Terhal, Barbara M},
  journal={Reviews of Modern Physics},
  volume={87},
  number={2},
  pages={307},
  year={2015},
  publisher={APS}
}

@article{aharonov1997ftthreshold,
  title={Fault-tolerant quantum computation with constant error rate},
  author={Aharonov, Dorit and Ben-Or, Michael},
  journal={arXiv preprint quant-ph/9611025},
  year={1997}
}

@article{knill1998resilient,
  title={Resilient quantum computation: error models and thresholds},
  author={Knill, Emanuel and Laflamme, Raymond and Zurek, Wojciech H},
  journal={Proceedings of the Royal Society of London. Series A: Mathematical, Physical and Engineering Sciences},
  volume={454},
  number={1969},
  pages={365--384},
  year={1998},
  publisher={The Royal Society}
}

@article{fowler2012surface,
  title={Surface codes: Towards practical large-scale quantum computation},
  author={Fowler, Austin G and Mariantoni, Matteo and Martinis, John M and Cleland, Andrew N},
  journal={Physical Review A},
  volume={86},
  number={3},
  pages={032324},
  year={2012},
  publisher={APS}
}

@article{dennis2002topological,
  title={Topological quantum memory},
  author={Dennis, Eric and Kitaev, Alexei and Landahl, Andrew and Preskill, John},
  journal={Journal of Mathematical Physics},
  volume={43},
  number={9},
  pages={4452--4505},
  year={2002},
  publisher={AIP Publishing}
}

@article{delfosse2021almost,
  title={Almost-linear time decoding algorithm for topological codes},
  author={Delfosse, Nicolas and Nickerson, Naomi H.},
  journal={Quantum},
  volume={5},
  pages={595},
  year={2021},
  publisher={Verein zur F{\"o}rderung des Open Access Publizierens in den Quantenwissenschaften}
}

@article{Roffe2019,
   title={Quantum error correction: an introductory guide},
   volume={60},
   ISSN={1366-5812},
   url={http://dx.doi.org/10.1080/00107514.2019.1667078},
   DOI={10.1080/00107514.2019.1667078},
   number={3},
   journal={Contemporary Physics},
   publisher={Informa UK Limited},
   author={Roffe, Joschka},
   year={2019},
   month=jul, pages={226--245} }

@article{surfacecodebelowthreshold2024,
   title={Quantum error correction below the surface code threshold},
   volume={638},
   ISSN={1476-4687},
   url={http://dx.doi.org/10.1038/s41586-024-08449-y},
   DOI={10.1038/s41586-024-08449-y},
   number={8052},
   journal={Nature},
   publisher={Springer Science and Business Media LLC},
   author={Acharya, Rajeev and Abanin, Dmitry A. and Aghababaie-Beni, Laleh and Aleiner, Igor and Andersen, Trond I. and Ansmann, Markus and Arute, Frank and Arya, Kunal and Asfaw, Abraham and Astrakhantsev, Nikita and Atalaya, Juan and Babbush, Ryan and Bacon, Dave and Ballard, Brian and Bardin, Joseph C. and Bausch, Johannes and Bengtsson, Andreas and Bilmes, Alexander and Blackwell, Sam and Boixo, Sergio and Bortoli, Gina and Bourassa, Alexandre and Bovaird, Jenna and Brill, Leon and Broughton, Michael and Browne, David A. and Buchea, Brett and Buckley, Bob B. and Buell, David A. and Burger, Tim and Burkett, Brian and Bushnell, Nicholas and Cabrera, Anthony and Campero, Juan and Chang, Hung-Shen and Chen, Yu and Chen, Zijun and Chiaro, Ben and Chik, Desmond and Chou, Charina and Claes, Jahan and Cleland, Agnetta Y. and Cogan, Josh and Collins, Roberto and Conner, Paul and Courtney, William and Crook, Alexander L. and Curtin, Ben and Das, Sayan and Davies, Alex and De Lorenzo, Laura and Debroy, Dripto M. and Demura, Sean and Devoret, Michel and Di Paolo, Agustin and Donohoe, Paul and Drozdov, Ilya and Dunsworth, Andrew and Earle, Clint and Edlich, Thomas and Eickbusch, Alec and Elbag, Aviv Moshe and Elzouka, Mahmoud and Erickson, Catherine and Faoro, Lara and Farhi, Edward and Ferreira, Vinicius S. and Burgos, Leslie Flores and Forati, Ebrahim and Fowler, Austin G. and Foxen, Brooks and Ganjam, Suhas and Garcia, Gonzalo and Gasca, Robert and Genois, {\'E}lie and Giang, William and Gidney, Craig and Gilboa, Dar and Gosula, Raja and Dau, Alejandro Grajales and Graumann, Dietrich and Greene, Alex and Gross, Jonathan A. and Habegger, Steve and Hall, John and Hamilton, Michael C. and Hansen, Monica and Harrigan, Matthew P. and Harrington, Sean D. and Heras, Francisco J. H. and Heslin, Stephen and Heu, Paula and Higgott, Oscar and Hill, Gordon and Hilton, Jeremy and Holland, George and Hong, Sabrina and Huang, Hsin-Yuan and Huff, Ashley and Huggins, William J. and Ioffe, Lev B. and Isakov, Sergei V. and Iveland, Justin and Jeffrey, Evan and Jiang, Zhang and Jones, Cody and Jordan, Stephen and Joshi, Chaitali and Juhas, Pavol and Kafri, Dvir and Kang, Hui and Karamlou, Amir H. and Kechedzhi, Kostyantyn and Kelly, Julian and Khaire, Trupti and Khattar, Tanuj and Khezri, Mostafa and Kim, Seon and Klimov, Paul V. and Klots, Andrey R. and Kobrin, Bryce and Kohli, Pushmeet and Korotkov, Alexander N. and Kostritsa, Fedor and Kothari, Robin and Kozlovskii, Borislav and Kreikebaum, John Mark and Kurilovich, Vladislav D. and Lacroix, Nathan and Landhuis, David and Lange-Dei, Tiano and Langley, Brandon W. and Laptev, Pavel and Lau, Kim-Ming and Le Guevel, Lo{\"i}ck and Ledford, Justin and Lee, Joonho and Lee, Kenny and Lensky, Yuri D. and Leon, Shannon and Lester, Brian J. and Li, Wing Yan and Li, Yin and Lill, Alexander T. and Liu, Wayne and Livingston, William P. and Locharla, Aditya and Lucero, Erik and Lundahl, Daniel and Lunt, Aaron and Madhuk, Sid and Malone, Fionn D. and Maloney, Ashley and Mandr{\`a}, Salvatore and Manyika, James and Martin, Leigh S. and Martin, Orion and Martin, Steven and Maxfield, Cameron and McClean, Jarrod R. and McEwen, Matt and Meeks, Seneca and Megrant, Anthony and Mi, Xiao and Miao, Kevin C. and Mieszala, Amanda and Molavi, Reza and Molina, Sebastian and Montazeri, Shirin and Morvan, Alexis and Movassagh, Ramis and Mruczkiewicz, Wojciech and Naaman, Ofer and Neeley, Matthew and Neill, Charles and Nersisyan, Ani and Neven, Hartmut and Newman, Michael and Ng, Jiun How and Nguyen, Anthony and Nguyen, Murray and Ni, Chia-Hung and Niu, Murphy Yuezhen and O'Brien, Thomas E. and Oliver, William D. and Opremcak, Alex and Ottosson, Kristoffer and Petukhov, Andre and Pizzuto, Alex and Platt, John and Potter, Rebecca and Pritchard, Orion and Pryadko, Leonid P. and Quintana, Chris and Ramachandran, Ganesh and Reagor, Matthew J. and Redding, John and Rhodes, David M. and Roberts, Gabrielle and Rosenberg, Eliott and Rosenfeld, Emma and Roushan, Pedram and Rubin, Nicholas C. and Saei, Negar and Sank, Daniel and Sankaragomathi, Kannan and Satzinger, Kevin J. and Schurkus, Henry F. and Schuster, Christopher and Senior, Andrew W. and Shearn, Michael J. and Shorter, Aaron and Shutty, Noah and Shvarts, Vladimir and Singh, Shraddha and Sivak, Volodymyr and Skruzny, Jindra and Small, Spencer and Smelyanskiy, Vadim and Smith, W. Clarke and Somma, Rolando D. and Springer, Sofia and Sterling, George and Strain, Doug and Suchard, Jordan and Szasz, Aaron and Sztein, Alex and Thor, Douglas and Torres, Alfredo and Torunbalci, M. Mert and Vaishnav, Abeer and Vargas, Justin and Vdovichev, Sergey and Vidal, Guifre and Villalonga, Benjamin and Heidweiller, Catherine Vollgraff and Waltman, Steven and Wang, Shannon X. and Ware, Brayden and Weber, Kate and Weidel, Travis and White, Theodore and Wong, Kristi and Woo, Bryan W. K. and Xing, Cheng and Yao, Z. Jamie and Yeh, Ping and Ying, Bicheng and Yoo, Juhwan and Yosri, Noureldin and Young, Grayson and Zalcman, Adam and Zhang, Yaxing and Zhu, Ningfeng and Zobrist, Nicholas},
   year={2024},
   month=dec, pages={920--926} }

@misc{gottesman1997stabilizercodesquantumerror,
      title={Stabilizer Codes and Quantum Error Correction}, 
      author={Daniel Gottesman},
      year={1997},
      eprint={quant-ph/9705052},
      archivePrefix={arXiv},
      primaryClass={quant-ph},
      url={https://arxiv.org/abs/quant-ph/9705052}, 
}

@misc{poulin2008iterativedecodingsparsequantum,
      title={On the iterative decoding of sparse quantum codes}, 
      author={David Poulin and Yeojin Chung},
      year={2008},
      eprint={0801.1241},
      archivePrefix={arXiv},
      primaryClass={quant-ph},
      url={https://arxiv.org/abs/0801.1241}, 
}

@article{Stace2010,
   title={Error correction and degeneracy in surface codes suffering loss},
   volume={81},
   ISSN={1094-1622},
   url={http://dx.doi.org/10.1103/PhysRevA.81.022317},
   DOI={10.1103/physreva.81.022317},
   number={2},
   journal={Physical Review A},
   publisher={American Physical Society (APS)},
   author={Stace, Thomas M. and Barrett, Sean D.},
   year={2010},
   month=feb }

@ARTICLE{9456887,
  author={Fuentes, Patricio and Etxezarreta Martinez, Josu and Crespo, Pedro M. and Garcia-Fr{\'i}as, Javier},
  journal={IEEE Access}, 
  title={Degeneracy and Its Impact on the Decoding of Sparse Quantum Codes}, 
  year={2021},
  volume={9},
  number={},
  pages={89093-89119},
  doi={10.1109/ACCESS.2021.3089829}}

@misc{wu2023fusionblossomfastmwpm,
      title={Fusion Blossom: Fast MWPM Decoders for QEC}, 
      author={Yue Wu and Lin Zhong},
      year={2023},
      eprint={2305.08307},
      archivePrefix={arXiv},
      primaryClass={quant-ph},
      url={https://arxiv.org/abs/2305.08307}, 
}

@article{Liyanage2024,
   title={{FPGA-Based Distributed Union-Find Decoder for Surface Codes}},
   volume={5},
   ISSN={2689-1808},
   url={http://dx.doi.org/10.1109/TQE.2024.3467271},
   DOI={10.1109/tqe.2024.3467271},
   journal={IEEE Transactions on Quantum Engineering},
   publisher={Institute of Electrical and Electronics Engineers (IEEE)},
   author={Liyanage, Namitha and Wu, Yue and Tagare, Siona and Zhong, Lin},
   year={2024},
   pages={1--18} }

@article{Higgott2025,
   title={Sparse Blossom: correcting a million errors per core second with minimum-weight matching},
   author={Higgott, Oscar and Gidney, Craig},
   journal={Quantum},
   volume={9},
   ISSN={2521-327X},
   url={http://dx.doi.org/10.22331/q-2025-01-20-1600},
   DOI={10.22331/q-2025-01-20-1600},
   publisher={Verein zur Forderung des Open Access Publizierens in den Quantenwissenschaften},
   year={2025},
   month=jan, pages={1600} }

@inproceedings{liyanage2023scalable,
  title={{Scalable quantum error correction for surface codes using FPGA}},
  author={Liyanage, Namitha and Wu, Yue and Deters, Alexander and Zhong, Lin},
  booktitle={2023 IEEE International Conference on Quantum Computing and Engineering (QCE)},
  volume={1},
  pages={916--927},
  year={2023},
  organization={IEEE}
}

@inproceedings{liang2026hardware,
  title={Hardware-Efficient Union-Find Decoder Towards Scalable Topological Quantum Codes},
  author={Liang, Shuang and Xu, Jubo and Lu, Yuncheng and Chen, Hao Mark and Yuan, Bo and Fan, Hongxiang},
  booktitle={2026 31st Asia and South Pacific Design Automation Conference (ASP-DAC)},
  pages={77--82},
  year={2026},
  organization={IEEE}
}

@article{ziad2026greenpeas,
  title={GreenPeas: Unlocking Adaptive Quantum Error Correction with Just-in-Time Decoding Hypergraphs},
  author={Ziad, Abbas B and Xu, Jubo and Fan, Hongxiang},
  journal={arXiv preprint arXiv:2604.16613},
  year={2026}
}

@article{fang2026lightstim,
  title={LightStim: A Framework for QEC Protocol Evaluation and Prototyping with Automated DEM Construction},
  author={Fang, Xiang and Wang, Ming and Wu, Yue and Prabhu, Sharanya and Tullsen, Dean and Miniskar, Narasinga Rao and Mueller, Frank and Humble, Travis and Ding, Yufei},
  journal={arXiv preprint arXiv:2604.21472},
  year={2026}
}

@inproceedings{vittal2023astrea,
  title={Astrea: Accurate quantum error-decoding via practical minimum-weight perfect-matching},
  author={Vittal, Suhas and Das, Poulami and Qureshi, Moinuddin},
  booktitle={Proceedings of the 50th Annual International Symposium on Computer Architecture},
  pages={1--16},
  year={2023}
}

@inproceedings{das2022lilliput,
  title={Lilliput: a lightweight low-latency lookup-table decoder for near-term quantum error correction},
  author={Das, Poulami and Locharla, Aditya and Jones, Cody},
  booktitle={Proceedings of the 27th ACM International Conference on Architectural Support for Programming Languages and Operating Systems},
  pages={541--553},
  year={2022}
}

@inproceedings{alavisamani2024promatch,
  title={Promatch: Extending the Reach of Real-Time Quantum Error Correction with Adaptive Predecoding},
  author={Alavisamani, Narges and Vittal, Suhas and Ayanzadeh, Ramin and Das, Poulami and Qureshi, Moinuddin},
  booktitle={Proceedings of the 29th ACM International Conference on Architectural Support for Programming Languages and Operating Systems, Volume 3},
  pages={818--833},
  year={2024}
}

@article{valentino2025quekuf,
  title={{QUEKUF: an FPGA Union Find Decoder for Quantum Error Correction on the Toric Code}},
  author={Valentino, Federico and Branchini, Beatrice and Conficconi, Davide and Sciuto, Donatella and Santambrogio, Marco D},
  journal={ACM Transactions on Reconfigurable Technology and Systems},
  year={2025},
  publisher={ACM New York, NY}
}

@article{tillich2013quantum,
  title={Quantum LDPC codes with positive rate and minimum distance proportional to the square root of the blocklength},
  author={Tillich, Jean-Pierre and Z{\'e}mor, Gilles},
  journal={IEEE Transactions on Information Theory},
  volume={60},
  number={2},
  pages={1193--1202},
  year={2014},
  publisher={IEEE}
}

@inproceedings{leverrier2015quantum,
  title={Quantum expander codes},
  author={Leverrier, Anthony and Tillich, Jean-Pierre and Z{\'e}mor, Gilles},
  booktitle={2015 IEEE 56th Annual Symposium on Foundations of Computer Science},
  pages={810--824},
  year={2015},
  organization={IEEE}
}

@article{fawzi2020constant,
  title={Constant overhead quantum fault tolerance with quantum expander codes},
  author={Fawzi, Omar and Grospellier, Antoine and Leverrier, Anthony},
  journal={Communications of the ACM},
  volume={64},
  number={1},
  pages={106--114},
  year={2021},
  publisher={ACM New York, NY, USA}
}

@inproceedings{fawzi2018efficient,
  title={Efficient decoding of random errors for quantum expander codes},
  author={Fawzi, Omar and Grospellier, Antoine and Leverrier, Anthony},
  booktitle={Proceedings of the 50th Annual ACM SIGACT Symposium on Theory of Computing},
  pages={521--534},
  year={2018}
}

@article{panteleev2021quantum,
  title={Quantum LDPC codes with almost linear minimum distance},
  author={Panteleev, Pavel and Kalachev, Gleb},
  journal={IEEE Transactions on Information Theory},
  volume={68},
  number={1},
  pages={213--229},
  year={2022},
  publisher={IEEE}
}

@article{panteleev2021degenerate,
  title={Degenerate quantum LDPC codes with good finite length performance},
  author={Panteleev, Pavel and Kalachev, Gleb},
  journal={Quantum},
  volume={5},
  pages={585},
  year={2021},
  publisher={Verein zur F{\"o}rderung des Open Access Publizierens in den Quantenwissenschaften}
}

@article{kitaev2003fault,
  title={Fault-tolerant quantum computation by anyons},
  author={Kitaev, A Yu},
  journal={Annals of physics},
  volume={303},
  number={1},
  pages={2--30},
  year={2003},
  publisher={Elsevier}
}

@misc{higgott2021pymatchingpythonpackagedecoding,
      title={PyMatching: A Python package for decoding quantum codes with minimum-weight perfect matching}, 
      author={Oscar Higgott},
      year={2021},
      eprint={2105.13082},
      archivePrefix={arXiv},
      primaryClass={quant-ph},
      url={https://arxiv.org/abs/2105.13082}, 
}

@article{Hsieh2011,
   title={NP-hardness of decoding quantum error-correction codes},
   volume={83},
   ISSN={1094-1622},
   url={http://dx.doi.org/10.1103/PhysRevA.83.052331},
   DOI={10.1103/physreva.83.052331},
   number={5},
   journal={Physical Review A},
   publisher={American Physical Society (APS)},
   author={Hsieh, Min-Hsiu and Le Gall, Fran{\c c}ois},
   year={2011},
   month=may }

@article{deMartiiOlius2024,
   title={Decoding algorithms for surface codes},
   volume={8},
   ISSN={2521-327X},
   url={http://dx.doi.org/10.22331/q-2024-10-10-1498},
   DOI={10.22331/q-2024-10-10-1498},
   journal={Quantum},
   publisher={Verein zur Forderung des Open Access Publizierens in den Quantenwissenschaften},
   author={deMarti iOlius, Antonio and Fuentes, Patricio and Or{\'u}s, Rom{\'a}n and Crespo, Pedro M. and Etxezarreta Martinez, Josu},
   year={2024},
   month=oct, pages={1498} }

\end{document}